\numberwithin{equation}{section}		
\numberwithin{figure}{section}			
\numberwithin{table}{section}				
\newtheorem{defi}{Definition}[section]
\newtheorem{lem}{Lemma}[section]
\newtheorem{thm}{Theorem}[section]
\newtheorem{eg}{Example}[section]
\newtheorem{prop}{Proposition}[section]
\title{IFF: A Super-resolution algorithm for Multiple Measurements}
\author{
Zetao Fei\thanks{\footnotesize Department of Mathematics, 
	HKUST,  Clear Water Bay, Kowloon, Hong Kong (zfei@connect.ust.hk).}\;
	and Hai Zhang\thanks{\footnotesize 
	Department of Mathematics, 
	HKUST,  Clear Water Bay, Kowloon, Hong Kong (haizhang@ust.hk). Hai Zhang was partially supported by Hong Kong RGC grant GRF 16304621.}}
\date{April 25, 2023}
\begin{document}
	\maketitle
	\begin{center}
		\textbf{Abstract}
	\end{center}
We consider the problem of reconstructing one-dimensional point sources from their Fourier measurements in a bounded interval $[-\Omega, \Omega]$. This problem is known to be challenging in the regime where the spacing of the sources is below the Rayleigh length $\frac{\pi}{\Omega}$. In this paper, we propose a super-resolution algorithm, called Iterative Focusing-localization and Filtering (IFF), to resolve closely spaced point sources from their multiple measurements that are obtained by using multiple unknown illumination patterns. The new proposed algorithm has a distinct feature in that it reconstructs the point sources one by one in an iterative manner and hence requires no prior information about the source numbers. The new feature also allows for a subsampling strategy that can  reconstruct sources using small-sized Hankel matrices and thus circumvent the computation of singular-value decomposition for large matrices as in the usual subspace methods. In addition, the algorithm can be paralleled.  A theoretical analysis of the methods behind the algorithm is also provided. The derived results imply a phase transition phenomenon in the reconstruction of source locations which is confirmed in the numerical experiment. Numerical results show that the algorithm can achieve a stable reconstruction for point sources with a minimum separation distance that is close to the theoretical limit. The efficiency and robustness of the algorithm have also been tested. This algorithm can be generalized to higher dimensions. 
\medskip
	
\textbf{Keywords}: super-resolution, multiple measurements, phase transition.  
	
\section{Introduction}\label{section:multiintroduction}
In optical microscopy, super-resolution techniques are dedicated to improving the resolution of optical signals. The resolution, however, is limited by the band-limited optical transfer function, which is a consequence of the optical diffraction limit. This resolution limit can be characterized using the so-called Rayleigh limit or Rayleigh wavelength which simply depends on the cutoff frequency in the signal. In recent years, super-resolution techniques, either physical or computational or combined, are developed to break the conventional Rayleigh limit.

In this paper, we are interested in developing computational methods or algorithms for the super-resolution problem which aims to reconstruct one-dimensional point sources from their Fourier
measurements in a bounded interval $[-\Omega, \Omega]$. We note that this problem is also closely related to the problem of Direction-of-Arrival (DOA) estimation. 
The first method for DOA can be dated back to the Bartlett method in 1948 \cite{10.1093/biomet/37.1-2.1} \cite{Bartlett1948}. Several important conventional methods are subsequently introduced to acquire higher resolution, such as MVDR \cite{1449208} and Linear Prediction \cite{1456696}. While easy to implement, these methods still provide limited accuracy and resolution. Subspace-based methods, such as MULtiple SIgnal Classification (MUSIC) \cite{1143830}, Estimation of Signal Parameters via Rotational Invariance Techniques (ESPRIT) \cite{32276}, and Matrix Pencil (MP) \cite{hua1990matrix}, have been developed to achieve higher resolution through eigen-decomposition. The idea of subspace methods dates back to Prony's method in 1795 \cite{Prony-1795}. We refer to \cite{batenkov2019super}\cite{LI2021118}\cite{li2019super}\cite{donoho1992superresolution}\cite{demanet2015recoverability}\cite{Moitra:2015:SEF:2746539.2746561}\cite{9410626}\cite{LIU2022402} for theoretical grounds for the stability of subspace method and the discussions on computational resolution limit.

In the last two decades, sparse exploiting methods have become popular, motivated by techniques in Compressive Sensing (CS) and sparse signal representation. These methods have broad applications, including scenarios with unknown source numbers and limited snapshots. Representative algorithms include LASSO, FOCUSS, TV norm minimization, Atomic norm minimization, SBL, SPICE, B-LASSO see \cite{doi:10.1137/0907087}\cite{10.2307/2346178}\cite{558475}\cite{738251}\cite{candes2014towards}\cite{candes2013super}\cite{tang2014near}\cite{chi2020harnessing}\cite{duval2015exact}\cite{AZAIS2015177}\cite{1315936}\cite{tipping2001sparse}\cite{4524050}\cite{5617289} etc. Theoretical results show that these methods can provably reconstruct off-the-grid point sources from noisy measurements under the assumption that the minimum separation distance between the point sources is above several Rayleigh wavelengths. A systematic introduction and review of DOA estimation techniques can be found in \cite{stoica2005spectral}\cite{godara2018handbook}\cite{YANG2018509} for example.

Intuitively, with multiple measurements that are obtained from multiple different illuminations, the accumulated information in the multiple measurements should contribute to a higher resolution. Indeed, subspace methods can take this advantage. In \cite{8374063}\cite{7547372}, a subspace method is applied to the aligned Hankel matrices to achieve better resolution. We also refer to \cite{liu2022mathematical} for the theoretical results. 
In this paper, we aim to propose a new super-resolution method/algorithm for multiple measurements. The method is motivated by the super-resolution techniques STED \cite{Hell2003-tt} and STORM \cite{Rust2006-kt}. We note that 
STED \cite{Hell2003-tt} achieves super-resolution by selectively deactivating fluorophores to minimize the illuminated area at the focal point, while STORM \cite{Rust2006-kt} breaks the Rayleigh limit by stochastically activating the individual photoactivatable fluorophores. The random one-by-one process inspires us to propose an iterative focusing-localization and filtering algorithm. We first numerically focus the illumination onto a single point source via an optimized linear combination of the available measurements, and then we localize its position. We next filter the signal of the localized point source from the measurements. We repeat this focusing and filtering process until all point sources are reconstructed. A theoretical analysis of the method behind the algorithm is also provided.

\subsection{Problem Setup}
In this paper, we consider the problem of reconstructing the locations of a collection of point sources from their multiple noisy measurements. Precisely, we consider the following mathematical model. Let $\mu=\sum_{j=1}^{n}a_{j}\delta_{y_j}$  be a discrete measure, 
where $y_j \in \mathbb R,j=1,\cdots,n$, represent the supports of the point sources and $a_j\in \mathbb C, j=1,\cdots,n$ their amplitudes. 

The point sources are illuminated by multiple illumination patterns $I_t$, $1\le t\le T$, and the illumination measures are
\[
\mu_t = \sum_{j=1}^n I_t(y_j)a_j\delta_{y_j}, \ 1\leq t \leq T.
\]
The measurements are the band-limited Fourier transform of the illuminated measures:
\begin{equation}\label{equ:multimodelsetting1}
 Y_t\left(\omega\right) = \mathcal{F} \mu_t \left(\omega\right) +  W_t\left(\omega\right)= \sum_{j=1}^{n}I_t(y_j)a_j e^{i y_j \omega} +  W_t\left(\omega\right), \quad 1\leq t\leq T, \ \omega \in [-\Omega, \Omega],
\end{equation}
where $\Omega$ is the cutoff frequency and $ W_t\left(\omega\right)$ is the noise. Throughout, we assume that $T\ge n$ i.e., the number of measurements is larger than the number of sources. With slight abuse of notations, we also denote $\mathcal{F} \mu_t$ as the function $\mathcal F \mu_t\left(\omega\right), \ \omega\in[-\Omega, \Omega]$ and $\|f\|_{\infty} = \sup_{\omega}|f\left(\omega\right)|$. We assume that $\|W_t\left(\omega\right)\|_{\infty}< \sigma$ with $\sigma$ being the noise level.
After uniformly sampling over the interval $[-\Omega, \Omega]$ at sample points $\omega_k = \frac{k}{K}\Omega, k = -K,\cdots, K$, we align all the measurements into a matrix as follows
\begin{align*}
\left(\begin{matrix}
Y_1\left(\omega_{-K}\right) & \cdots & Y_1\left(\omega_{K}\right)\\
\vdots &  &\vdots\\
Y_T\left(\omega_{-K}\right) & \cdots & Y_T\left(\omega_{K}\right)
\end{matrix}\right)
&=
\left(\begin{matrix}
I_1(y_1) & \cdots & I_1(y_n)\\
\vdots &  &\vdots\\
I_T(y_1) & \cdots & I_T(y_n)
\end{matrix}\right)
\left(\begin{matrix}
a_1 & & \\
 & \ddots &\\
 & & a_n\\
\end{matrix}\right)
\left(\begin{matrix}
e^{iy_1\omega_{-K}} & \cdots & e^{iy_1\omega_{K}}\\
\vdots &  &\vdots\\
e^{iy_n\omega_{-K}} & \cdots & e^{iy_n\omega_{K}}
\end{matrix}\right)\\
&\ +\left(\begin{matrix}
W_1\left(\omega_{-K}\right) & \cdots & W_1\left(\omega_{K}\right)\\
\vdots &  &\vdots\\
W_T\left(\omega_{-K}\right) & \cdots & W_T\left(\omega_{K}\right)
\end{matrix}\right).
\end{align*}
We denote the above equations as 
\begin{equation}{\label{image_eq}}
	Y = LAE+W.
\end{equation}
We observe that each row of $L$ represents an illumination pattern acting on different point sources while each column of $L$ represents different illuminations on a fixed point source.

Finally, for ease of notation, we denote for a given measure $\nu$,
$$
[\nu]=\left(\mathcal{F}\nu\left(\omega_{-K}\right),\mathcal{F}\nu\left(\omega_{-K+1}\right),\cdots,\mathcal{F}\nu\left(\omega_{K}\right)\right). 
$$

\subsection{Main contribution}
In this paper, we are dedicated to developing a new super-resolution algorithm to reconstruct off-the-grid point sources from multiple noisy measurements. We point out that we do not assume the noise pattern throughout. Unlike existing super-resolution algorithms that reconstruct all the point sources at once, the proposed method iteratively processes the following steps:
\begin{enumerate}
    \item Achieve numerical source focusing by solving an optimization problem.
    \item Localize the focused point source by using a subspace method.
    \item Filter out the recovered source by using  properly designed annihilating filters.
\end{enumerate}
We summarize the proposed algorithm as iterative focusing-and-filtering-based source identification. This paper also provides a theoretical analysis of the methods behind IFF. Through the discussion, we observe several features of IFF:
\begin{itemize}
    \item After the source focusing step, we only need to reconstruct a single source at each localization step. This fact allows us to deal with an optimization problem of rank-one Hankel matrices and thus allows the application of a subsampling strategy. 
    \item IFF is a prior-free super-resolution method of high accuracy.
    \item IFF can achieve stable reconstruction for point sources with a minimum separation distance that is close to the theoretical limit.
\end{itemize}
\subsection{Organization of the paper}
The paper is organized in the following way. In Section \ref{sec:source focusing} and Section \ref{sec:source removal}, we explain the main ideas of source focusing and source filtering with theoretical discussion. The detailed implementation of the proposed algorithm can be found in Section \ref{sec:implementation}. In Section \ref{sec:numerical experiment}, we present numerical results of the IFF Method. We conclude the paper with a discussion on future research topics in Section \ref{sec:conclusion}.

\section{Source Focusing and Localization}{\label{sec:source focusing}}
This section is dedicated to the first part of the proposed algorithm, namely source focusing and localization. We begin with introducing the idea of the source focusing. In Section \ref{subsec:dis_snr_focus}, we discuss the change of SNR resulting from source focusing. The design of the source focusing algorithm is proposed in Section \ref{subsec:algo_focus}. Finally, a theoretical analysis of the method behind is given in Section \ref{subsec:mini_separation and error}.

We first introduce some concepts and notations that are used throughout the paper.
We assume the illumination matrix, $L$, has linearly independent columns and we write it into column blocks as follows:
\begin{equation*}
	L = \left(
	\begin{matrix}
	\alpha_1,\alpha_2,\cdots,\alpha_n
	\end{matrix}\right).
\end{equation*}
For $j = 1,\cdots,n$, we define matrix $L_{j}$ as
\begin{equation*}
	L_{j} = \left(
	\begin{matrix}
	\alpha_1,\cdots,\alpha_{j-1},\alpha_{j+1},\cdots,\alpha_n
	\end{matrix}\right).
\end{equation*}
We denote the projection map onto the column space of $L_{j}$ as $\mathcal{P}_{L_{j}}$ and the identity map as $\mathcal{I}$. From basic linear algebra, we know that $\mathcal{P}_{L_{j}} = L_{j}\left(L_{j}^*L_{j}\right)^{-1}L_{j}^*$.

We now consider the task of focusing the illumination onto the $j$-th source $y_j$ by using linear combinations of the given measurements. For this purpose, we denote by $U_j$ the permutation matrix that permutes the $j$-th column with the $n$-th column. We rewrite (\ref{image_eq}) as follows:
\begin{align}{\label{permutation_Y}}
    Y=LU_j\cdot U_j^{-1}AE+W.
\end{align}
Applying the QR decomposition to $LU_j$, we have
\begin{align}{\label{QR decomp}}
    LU_j = Q\left(\begin{matrix}
	R\\0
	\end{matrix}\right).
\end{align}
Multiplying $Q^{*}$ on the both sides of (\ref{permutation_Y}) yields
\begin{equation}{\label{QtY}} 
	Q^{*}Y = \left(\begin{matrix}
	R\\0
	\end{matrix}\right)U_j^{-1}AE+Q^{*}W.
\end{equation}
We observe that the n-th row of $Q^{*}Y$ can then be regarded as the measurement generated by the single source $y_j$. Indeed, the $n$-th row of equation (\ref{QtY}), denoted as $\tilde{Y}_j$, gives
\begin{align}{\label{tilde_Y_j}}
\tilde{Y}_j 
&= \left(\sum_{t=1}^{T}q_{tn}Y_t\left(\omega_{-K}\right),\cdots,\sum_{t=1}^{T}q_{tn}Y_t\left(\omega_{K}\right)\right) \notag\\
&=R_{nn}a_j\left(e^{iy_j\omega_{-K}},\cdots,e^{iy_j\omega_{-K}}\right)+\left(\sum_{t=1}^{T}q_{tn}W_t\left(\omega_{-K}\right),\cdots,\sum_{t=1}^{T}q_{tn}W_t\left(\omega_{K}\right)\right)\notag\\
&\triangleq \tilde{a}_j\left(e^{iy_j\omega_{-K}},\cdots,e^{iy_j\omega_{-K}}\right)+\tilde{W}_j.
\end{align}
We note that the component $\tilde{a}_j$, which depends on the permutation matrix $U_j$, gives an effective illumination amplitude on the $j$-th source while all the other sources are ''quenched'' by the linear combination of given illumination patterns.

We can use the measurement from $\tilde{Y}_j $ to reconstruct the position of focused point source $y_j$ using a standard subspace method, such as  MUltiple SIgnal Classification (MUSIC) and Matrix Pencil (MP). For this purpose, we form the following Hankel matrix for $\tilde{Y}_j$, 
\begin{align}
\tilde{H_j} = \left(
\begin{matrix}
\tilde{Y}_j\left(\omega_{-K}\right) & \tilde{Y}_j\left(\omega_{-K+1}\right) & \cdots & \tilde{Y}_j\left(\omega_{0}\right)\\
\tilde{Y}_j\left(\omega_{-K+1}\right) & \tilde{Y}_j\left(\omega_{-K+2}\right) & \cdots & \tilde{Y}_j\left(\omega_{1}\right)\\
\vdots & \vdots& &\vdots\\
\tilde{Y}_j\left(\omega_{0}\right) &\tilde{Y}_j\left(\omega_{1}\right)& \cdots & \tilde{Y}_j\left(\omega_{K}\right)
\end{matrix}
\right).
\end{align}
From (\ref{tilde_Y_j}), we see that $\tilde{H_j}$ is a linear combination of $\{H_t\}_{t=1}^T$ with coefficients $\{q_{tn}\}_{t=1}^T$, i.e.,
\begin{equation}  \label{eq-linearcomb}
    \tilde{H_j} = \sum_{t=1}^T q_{tn} H_t. 
\end{equation}
Here, the matrices $H_t$'s are the Hankel matrices associated with the given measurements $Y_t$'s in (\ref{equ:multimodelsetting1}), i.e. 
\begin{align}{\label{hankel}}
H_t = \left(
\begin{matrix}
Y_t\left(\omega_{-K}\right) & Y_t\left(\omega_{-K+1}\right) & \cdots & Y_t\left(\omega_{0}\right)\\
Y_t\left(\omega_{-K+1}\right) & Y_t\left(\omega_{-K+2}\right) & \cdots & Y_t\left(\omega_{1}\right)\\
\vdots & \vdots& &\vdots\\
Y_t\left(\omega_{0}\right) &Y_t\left(\omega_{1}\right)& \cdots & Y_t\left(\omega_{K}\right)
\end{matrix}
\right),~~1\le t \le T.
\end{align}

It is important to notice that $\tilde{H_j}$ is the summation of a rank-one matrix and a noise matrix. We shall exploit this fact in our numerical reconstruction of the position of the focused point source. On the other hand, since the signal-to-noise ratio (SNR) plays a crucial role in the reconstruction of the subspace methods, we provide 
a theoretical analysis of the SNR for the focused measurement $\tilde{Y_j}$ 
in the next subsection. 

\subsection{Discussion on SNR for the measurement of perfectly focusing}{\label{subsec:dis_snr_focus}}
We start with the following estimation for each component of $\tilde{W}_j$ in equation (\ref{tilde_Y_j}):
\begin{align}{\label{noise level after focusing}}
    |\tilde{w}_{ij}| = |\sum_{n=1}^{T} q_{ni}W_{nj}|\le \sqrt{\sum_{n=1}^{T}q_{ni}^2\cdot\sum_{n=1}^{T}W_{nj}^2}\le \sqrt{T}\sigma.
\end{align}
We then consider $\tilde{a}_j$ in (\ref{tilde_Y_j}). It is clear that it depends on $R_{nn}$.
We can calculate that 
\begin{align}{\label{Rnn}}
    R_{nn} = \|\left(\mathcal{I}-\mathcal{P}_{L_j}\right)\alpha_j\|_2.
\end{align}
Intuitively, the smaller $R_{nn}$ is, the lower SNR is and the more difficult the super-resolution problem is. The relationship between $R_{nn}$ and the illumination matrix $L$ can be illustrated through the following simple examples. First, we assume $L$ has a deficient row rank. In this case, we have $R_{nn}=0$. Without sufficient valid measurements, it is impossible to numerically focus on a single source. It also implies that adding the same measurement is not helpful since it cannot affect $R_{nn}$. Meanwhile, when $L$ has the same row vectors, the problem will degenerate to the single snapshot case.
Then, let us consider the case when $L$ is diagonal. It is an ideal case where the sources are automatically focused in each illumination. In this case, $R_{nn}$ equals the illumination amplitude for the illumination onto the source $y_j$.

We now derive some quantitative characterization of $R_{nn}$. Under the assumption that the illumination matrix consists of independent and identically distributed random variables with mean $u$ and variance $v^2$, we have the following estimation for the second-order moment:
\begin{prop}{\label{L_eff u not 0}}
    For $R_{nn}$ given in (\ref{Rnn}), we have
    \begin{align}
        \mathbb{E}\ R_{nn}^2 \ge \left(T-n+1\right)v^2.
    \end{align}
\end{prop}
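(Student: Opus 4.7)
The plan is to exploit the independence between the column $\alpha_j$ (the focusing target column) and the remaining columns assembled in $L_j$, then carry out a conditional expectation computation against the projection $\mathcal{I}-\mathcal{P}_{L_j}$.

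First, I would rewrite $R_{nn}^2$ as a quadratic form: writing $P := \mathcal{I} - \mathcal{P}_{L_j}$, which is the orthogonal projection onto the orthogonal complement of the column space of $L_j$, we have
\begin{equation*}
R_{nn}^2 \;=\; \|P\alpha_j\|_2^2 \;=\; \alpha_j^{*} P \alpha_j.
\end{equation*}
Since $L$ has i.i.d.\ entries, the column $\alpha_j$ is independent of the submatrix $L_j$, hence of the (random) projector $P$. Conditioning on $L_j$ and using the identity $\alpha_j^* P \alpha_j = \operatorname{tr}(P\alpha_j\alpha_j^*)$, the conditional expectation reduces to an inner product of $P$ with the second-moment matrix of $\alpha_j$.

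Second, I would compute that second-moment matrix. Because each entry of $\alpha_j$ has mean $u$ and variance $v^2$,
\begin{equation*}
\mathbb{E}[\alpha_j \alpha_j^{*}] \;=\; v^2 I_T + u^2 \mathbf{1}\mathbf{1}^{*},
\end{equation*}
where $\mathbf{1}$ is the all-ones vector in $\mathbb{R}^T$. Plugging this into the conditional expectation yields
\begin{equation*}
\mathbb{E}\bigl[R_{nn}^2 \,\big|\, L_j\bigr] \;=\; v^2 \operatorname{tr}(P) + u^2\, \mathbf{1}^{*} P \mathbf{1} \;=\; v^2\bigl(T-\operatorname{rank}(L_j)\bigr) + u^2\|P\mathbf{1}\|_2^2.
\end{equation*}

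Third, I would bound each term. Since $L_j$ has only $n-1$ columns, $\operatorname{rank}(L_j)\le n-1$ deterministically, so $\operatorname{tr}(P) \ge T-n+1$. The second term $u^2\|P\mathbf{1}\|_2^2$ is nonnegative. Combining these and taking the outer expectation over $L_j$ gives $\mathbb{E}\,R_{nn}^2 \ge (T-n+1)v^2$, as claimed.

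The argument is essentially a one-line application of conditional expectation once the independence of $\alpha_j$ from $L_j$ is recognized, so there is no serious obstacle; the only mild subtlety is that the bound on $\operatorname{tr}(P)$ must be deterministic (in particular, not relying on a.s.\ full column rank of $L_j$), which we obtain for free from the trivial rank bound $\operatorname{rank}(L_j)\le n-1$. Note also that the bound is insensitive to the mean $u$: the proposition's proof discards the $u^2\|P\mathbf{1}\|_2^2$ contribution, suggesting that a sharper statement retaining this term could be given when $u\neq 0$.
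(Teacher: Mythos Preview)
Your proof is correct and follows essentially the same route as the paper: condition on $L_j$, exploit the independence of $\alpha_j$ from $L_j$, compute the inner conditional expectation via $\mathbb{E}[\alpha_j\alpha_j^{*}]=v^{2}I+u^{2}\mathbf{1}\mathbf{1}^{*}$, and then drop the nonnegative $u^{2}\|P\mathbf{1}\|_2^2$ term. The only minor difference is that the paper takes $\operatorname{tr}(P)=T-n+1$ using the standing assumption that $L$ has independent columns, whereas you use the deterministic bound $\operatorname{rank}(L_j)\le n-1$; both lead to the same inequality.
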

Especially, when $u=0$ and the components of $L$ are i.i.d. subgaussian random variables, we have the following concentration property for $R_{nn}$:
\begin{prop}{\label{L_eff}}
	For $u = 0$ and $\|L_ {ij}\|_{\psi_2}\le B$, for any $t>0$, we have
	\begin{equation}{\label{Leff_lowerbd}}
		\mathbb{P} \left(|R_{nn}^2-\left(T-n+1\right)v^2|>t\right) \le 2\exp{\left(-c\min\left(\frac{t^2}{B^4\left(T-n+1\right)},\frac{t}{B^2}\right)\right)},
	\end{equation}
	where $c$ is some positive absolute constant and $\|\cdot\|_{\psi_2}$ is the subgaussian norm defined as 
	\begin{equation}\label{psi2_cond}
    \|L_{ij}\|_{\psi_2} = \sup_{p\ge1} p^{-1/2}\left(\mathbb{E}L_{ij}^p\right)^{1/p}.
	\end{equation}
\end{prop}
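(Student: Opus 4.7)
The plan is to exploit the independence between the $j$-th column $\alpha_j$ and the remaining columns making up $L_j$, reducing the claim to a standard concentration inequality for subgaussian quadratic forms. First, I would condition on $L_j$. Because the entries of $L$ are i.i.d., $\alpha_j$ is independent of $L_j$, and
\[
R_{nn}^2 = \alpha_j^{*}\left(\mathcal{I}-\mathcal{P}_{L_j}\right)\alpha_j
\]
may, conditionally on $L_j$, be regarded as a quadratic form $\alpha_j^{*} M \alpha_j$ in the isotropic mean-zero subgaussian vector $\alpha_j$, where $M := \mathcal{I}-\mathcal{P}_{L_j}$ is a deterministic self-adjoint matrix.

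Next I would establish the relevant norms of $M$. Almost surely $L_j$ has full column rank $n-1$, so $M$ is the orthogonal projection onto a subspace of dimension $r := T-n+1$, giving $\operatorname{tr}(M) = r$, $\|M\|_F^2 = r$, and $\|M\|_{\mathrm{op}} = 1$. Since $\alpha_j$ has i.i.d.\ entries with mean $u=0$ and variance $v^2$, a direct calculation yields the conditional mean
\[
\mathbb{E}\!\left[\alpha_j^{*} M \alpha_j \,\big|\, L_j\right] = v^2 \operatorname{tr}(M) = v^2(T-n+1).
\]
With these values in hand, I would invoke the Hanson--Wright inequality: for a subgaussian vector $X$ with independent mean-zero coordinates satisfying $\|X_i\|_{\psi_2}\le B$ and a deterministic symmetric $M$,
\[
\mathbb{P}\!\left(\left|X^{*} M X - \mathbb{E}\,X^{*} M X\right| > t\right) \le 2\exp\!\left(-c\min\!\left(\frac{t^2}{B^4\|M\|_F^2},\,\frac{t}{B^2\|M\|_{\mathrm{op}}}\right)\right).
\]
Applied conditionally on $L_j$ with $\|M\|_F^2 = T-n+1$ and $\|M\|_{\mathrm{op}} = 1$, this yields exactly the stated tail bound. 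Because the right-hand side does not depend on $L_j$, integrating out $L_j$ produces the unconditional inequality claimed in \eqref{Leff_lowerbd}.

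The main obstacle I expect is less the Hanson--Wright application itself (which is by now standard) than the careful bookkeeping around the full-rank event for $L_j$. For subgaussian distributions with a density, $L_j$ has full column rank almost surely and the argument goes through cleanly. For general subgaussian laws (including discrete ones) one needs an additional step: either a small Gaussian perturbation and a limit, or the observation that when $L_j$ is rank-deficient $M$ remains an orthogonal projection with $\operatorname{tr}(M) \ge T-n+1$ and $\|M\|_{\mathrm{op}}=1$, so the Hanson--Wright bound still controls the deviation from $v^2(T-n+1)$ up to an additional nonnegative shift. A secondary technical point is ensuring that the form of Hanson--Wright cited matches the convention for $\|\cdot\|_{\psi_2}$ in \eqref{psi2_cond}; since the two differ only by an absolute constant, this merely affects the implicit constant $c$.
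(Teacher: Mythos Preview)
Your proposal is correct and is essentially the same argument as the paper's: compute the conditional mean $\mathbb{E}[R_{nn}^2\mid L_j]=v^2\operatorname{tr}(\mathcal{I}-\mathcal{P}_{L_j})=(T-n+1)v^2$ using the projection properties established in the proof of Proposition~\ref{L_eff u not 0}, and then apply the Hanson--Wright inequality to the quadratic form $\alpha_j^*(\mathcal{I}-\mathcal{P}_{L_j})\alpha_j$ with $\|M\|_F^2=T-n+1$ and $\|M\|_{\mathrm{op}}=1$. The paper states this in one line, whereas you spell out the conditioning and the full-rank caveat more carefully, but the route is identical.
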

In practice, the illumination amplitudes can be assumed to be bounded, i.e. the range of $L_{ij}$ is bounded. Under this assumption, the condition expressed by (\ref{psi2_cond}) is always met. As for the order of $R_{nn}$, we let $t = \frac{v^2}{2}\left(T-n+1\right)$, then the inequality (\ref{Leff_lowerbd}) gives 
that $R_{nn}\sim \mathcal{O}\left(\sqrt{T}\right)$ with probability $1-2e^{-\mathcal{O}\left(T\right)}$. Combining this result with (\ref{noise level after focusing}), we observe that under the assumption in Proposition \ref{L_eff}, numerical focusing via linear combination can preserve SNR with high probability if we use enough linearly independent measurements.

\subsection{Algorithm for focusing}{\label{subsec:algo_focus}}

For the reader's convenience, we review the model and the goal at the beginning of this section. For given point sources $\mu$, we have $T$ noisy measurements $\{Y_t\}_{t=1}^T$ with different illumination pattern. The goal is to reconstruct the position of the point sources using a one-by-one strategy, which requires that an algorithm for sources focusing should be developed.

We have shown in the previous discussion that with sufficiently many linearly independent illumination patterns, we can achieve a perfect illumination pattern that focuses on a single point source through a proper linear combination of the given measurements, see (\ref{eq-linearcomb}).  
We now address the issue of finding the linear combination coefficients $\{q_{tn}\}$'s numerically.

It is clear that a constant multiplier of $\tilde{Y}_j$ does not affect the SNR.  In finding the linear combination coefficients, we do not need to apply additional constraints about the $\ell_2$-norm of $\left(q_{1,n},\cdots,q_{T,n}\right)$ though it is required to be one in QR decomposition. Let $N = \tilde{H}_j^*\tilde{H_j}$, we consider the following unconstrained optimization problem,
\begin{align}{\label{opt_pro}}
\min_{q_{tn}} f\left(q_{1,n},\cdots,q_{Tn}\right):=\frac{Tr\left(N\right)^2}{Tr\left(N^*N\right)}.
\end{align}
We observe that $f\ge 1$ and $f=1$ if and only if $\tilde{H_j}$ is a rank-$1$ matrix. We point out that the objective function $f$ is determined only by the ratios of singular values. By maximizing the gap between the largest singular value and the rest, $f$ can reflect the low-rank nature better than traditional objective functionals which use the nuclear norm or other norms. Meanwhile, in the computation of $f$, there is no need for singular value decomposition.
For the noiseless case, we observe that the solution to the optimization problem (\ref{opt_pro}) is not unique and each solution corresponds to a perfect focus on one of the point sources. In practice, with the presence of noise, we expect a solution to the optimization problem (\ref{opt_pro}) can yield a good illumination pattern that is focused on a single point source in the sense that the contrast of the illumination amplitude on the focused point source and the others are sufficiently large so that the focused source can be distinguished and localized. The following estimate sheds light on this issue. 
\begin{prop}{\label{contrast}}
    Let $\mu=\sum_{j=1}^n a_j\delta_{y_j}$, $Y=\mathcal{F}\mu\left(\omega\right)+W\left(\omega\right)$, where $\omega\in[-\Omega,\Omega]$ and $\|W\|_\infty<\sigma$. Let $H$ be the Hankel matrix generated by $Y$ and $N=H^*H$ be the normal matrix of $H$. Assume $|a_s| = \max_{j} |a_j|$, and $|a_s|>\sum_{t\ne s} |a_t|+\sigma$, then we have the following estimation:
    \begin{align}
        \frac{Tr\left(N\right)^2}{Tr\left(N^*N\right)}\le \left(\frac{|a_s|+\sum_{t\ne s} |a_t|+\sigma}{|a_s|-\sum_{t\ne s} |a_t|-\sigma}\right)^4
    \end{align}
\end{prop}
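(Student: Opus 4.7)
The plan is to first recast the ratio $Tr(N)^2/Tr(N^*N)$ in terms of the singular values of $H$: since $Tr(N)=\sum_i\sigma_i(H)^2 = \|H\|_F^2$ and $Tr(N^*N)=\sum_i\sigma_i(H)^4$, dropping all but the leading singular value in the denominator immediately gives
$$\frac{Tr(N)^2}{Tr(N^*N)} \;\le\; \Bigl(\|H\|_F^2\big/\sigma_1(H)^2\Bigr)^{2}.$$
The problem therefore reduces to an upper bound on $\|H\|_F$ paired with a lower bound on $\sigma_1(H)$, with a common leading prefactor that will cancel in the end.

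To produce both bounds, I would split $H$ into a rank-one dominant piece plus a perturbation: $H = \tilde a_s\,\phi(y_s)\phi(y_s)^{T} + R$, where $\phi(y)=(1,e^{iy\Omega/K},\ldots,e^{iyK\Omega/K})^{T}$ (so $\|\phi(y)\|_2^{2} = K+1$), $\tilde a_s = a_s e^{iy_s\omega_{-K}}$ satisfies $|\tilde a_s|=|a_s|$, and $R = \sum_{j\ne s}\tilde a_j\phi(y_j)\phi(y_j)^{T} + E$, with $E$ the Hankel matrix assembled from the noise samples $W(\omega_\ell)$. Each summand $\tilde a_j\phi(y_j)\phi(y_j)^{T}$ is rank one with both operator and Frobenius norm equal to $|a_j|(K+1)$, while the noise block obeys $\|E\|_{op}\le\|E\|_F\le(K+1)\sigma$. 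The triangle inequality in each norm then produces
$$\|H\|_F \le (K+1)\Bigl(\textstyle\sum_j|a_j|+\sigma\Bigr), \qquad \|R\|_{op}\le (K+1)\Bigl(\textstyle\sum_{j\ne s}|a_j|+\sigma\Bigr).$$

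For the matching lower bound on $\sigma_1(H)$, Weyl's perturbation inequality yields
$$\sigma_1(H)\ge \sigma_1\bigl(\tilde a_s\phi(y_s)\phi(y_s)^{T}\bigr)-\|R\|_{op}\ge (K+1)\Bigl(|a_s|-\textstyle\sum_{j\ne s}|a_j|-\sigma\Bigr),$$
which is strictly positive by the hypothesis $|a_s|>\sum_{t\ne s}|a_t|+\sigma$. Substituting the two bounds into the preliminary inequality cancels the common $(K+1)$ prefactor in numerator and denominator, and rewriting $\sum_j|a_j|+\sigma = |a_s|+\sum_{j\ne s}|a_j|+\sigma$ recovers the claimed inequality exactly. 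The only mildly subtle step is bounding the noise contribution $E$ in operator (rather than Frobenius) norm, but since the $(K+1)$ prefactors cancel in the final ratio the crude bound $\|E\|_{op}\le\|E\|_F$ already suffices, and no sharper Hankel-specific estimate is required.
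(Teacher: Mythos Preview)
Your proposal is correct and follows essentially the same route as the paper: both arguments decompose $H$ into the dominant rank-one block $a_s e^{-iy_s\Omega}\phi(y_s)\phi(y_s)^{\top}$ plus the remaining sources and the noise Hankel matrix, then combine the triangle-inequality bound $\|H\|_F\le (K+1)\bigl(\sum_j|a_j|+\sigma\bigr)$ with the Weyl lower bound $\sigma_1(H)\ge (K+1)\bigl(|a_s|-\sum_{j\ne s}|a_j|-\sigma\bigr)$, and finish with $\dfrac{Tr(N)^2}{Tr(N^*N)}\le\bigl(\|H\|_F/\|H\|_2\bigr)^4$. The only cosmetic difference is that the paper introduces the auxiliary matrices $\hat{\mathcal H}_j$ to record $\|\mathcal H_j\|_F=(K+1)|a_j|$, which you compute directly.
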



In the next proposition, we derive an upper bound of the minimization problem (\ref{opt_pro}) when there is a single source with amplitude $M$ and noise level $\sigma$. 

\begin{prop}{\label{threshold}}
    Let $\mu=M\delta_y$, $Y=\mathcal{F}\mu\left(\omega\right)+W\left(\omega\right)$, where $\omega\in[-\Omega,\Omega]$, $\|W\|_\infty<\sigma$ and $M>2\sigma$. Let $H$ be the Hankel matrix generated by $Y$ and $N=H^*H$ be the normal matrix of $H$, then we have the following estimation:
    \begin{align}
        \frac{Tr\left(N\right)^2}{Tr\left(N^*N\right)}\le \left(1+4K\frac{\sigma^2}{M^2}\right)^2. 
    \end{align}
\end{prop}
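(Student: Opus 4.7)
The plan is to decompose $H = H_0 + E$, where $H_0$ is the noiseless rank-one Hankel matrix generated by $M\delta_y$ and $E$ is the Hankel noise matrix with entries bounded by $\sigma$ in modulus. Using $Y(\omega) = Me^{iy\omega}$ for the noise-free part, $H_0$ factors as $H_0 = Mc\, uu^T$ with $c = e^{iy\omega_{-K}}$ and $u_k = e^{iy\Omega k/K}$ for $k = 0,\dots, K$. Since $\|u\|^2 = K+1$, the unique nonzero singular value of $H_0$ equals $M(K+1)$. On the noise side, the Frobenius bound gives $\|E\|_{\mathrm{op}} \le \|E\|_F \le (K+1)\sigma$.

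Let $s_i = s_i(H)$ denote the singular values of $H$ and set $\tau = \sum_{i \ge 2} s_i^2$. I would rewrite the objective in terms of singular values and immediately drop the nonnegative tail $\sum_{i \ge 2} s_i^4$ in the denominator:
\begin{equation*}
\frac{Tr(N)^2}{Tr(N^*N)} \;=\; \frac{\bigl(\sum_i s_i^2\bigr)^2}{\sum_i s_i^4} \;=\; \frac{(s_1^2 + \tau)^2}{s_1^4 + \sum_{i \ge 2} s_i^4} \;\le\; \left(1 + \frac{\tau}{s_1^2}\right)^2.
\end{equation*}
It then suffices to show that $\tau/s_1^2 \le 4K\sigma^2/M^2$.

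The key step is Weyl's inequality applied to $H = H_0 + E$: this yields $s_1(H) \ge s_1(H_0) - \|E\|_{\mathrm{op}} \ge (K+1)(M - \sigma)$, while $s_i(H) \le s_i(H_0) + \|E\|_{\mathrm{op}} = \|E\|_{\mathrm{op}} \le (K+1)\sigma$ for each $i \ge 2$. Since $H$ is $(K+1) \times (K+1)$, at most $K$ tail singular values appear, so $\tau \le K(K+1)^2 \sigma^2$. Combining these bounds,
\begin{equation*}
\frac{\tau}{s_1^2} \;\le\; \frac{K\sigma^2}{(M-\sigma)^2} \;\le\; \frac{4K\sigma^2}{M^2},
\end{equation*}
where the last step uses $M > 2\sigma$, hence $M - \sigma \ge M/2$. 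Substituting back gives the claimed inequality.

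The main obstacle I anticipate is finding an $\|E\|_{\mathrm{op}}$ estimate strong enough to match the stated constant. Fortunately, the crude Frobenius bound $\|E\|_{\mathrm{op}} \le (K+1)\sigma$ turns out to suffice, matching exactly the factor $K+1$ appearing in $s_1(H_0)$ so that the dimensional factors cancel nicely in the ratio $\tau/s_1^2$. Any attempt to sharpen the constant $4K$ further would require a more refined spectral bound that exploits the Hankel structure of $E$ instead of treating it as a generic matrix with entries bounded by $\sigma$.
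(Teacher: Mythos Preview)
Your proof is correct and follows essentially the same route as the paper: decompose $H$ into the rank-one noiseless Hankel matrix plus a noise Hankel matrix, apply Weyl's inequality together with the crude Frobenius bound $\|E\|_F\le (K+1)\sigma$, and then use $M>2\sigma$ to convert $(M-\sigma)^{-2}$ into $4/M^2$. The paper normalizes by $\hat\sigma_1$ and mentions the monotonicity of $g(t)=(1+t)^2/(1+t^2)$, but in the end it also simply drops the tail $\sum_{j\ge 2}\hat\sigma_j^4$ from the denominator, exactly as you do.
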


The upper bound in the proposition above can serve as a theoretical ground for the threshold we propose in the following. Note that $\sigma$ therein is the SNR after the focusing step which is not available in practice. 
We shall, nevertheless, use the original SNR as a good guess for it (see the discussion at the end of the previous subsection). 
We therefore set
\begin{equation} \label{eq-gamma}
\Gamma=\left(1+4K\left(\frac{1}{SNR}\right)^2\right)^2
\end{equation}
as the threshold, where SNR represents the signal-to-noise ratio in the original problem. We reject all the solutions having function values larger than $\Gamma$. We call this step \textbf{clean-up step}. 
Notice that the solutions to the optimization problem (\ref{opt_pro}) depend on the choice of initial guesses. With the equation (\ref{permutation_Y}) in our mind, we expect that the solutions may give us several clusters of positions. We then take an average over each cluster to get the recovered point sources $\{\hat{y}_p\}_{p=1}^{P}$. Clearly, $P\le n$ here. The pseudo-code is given in \textbf{Algorithm \ref{algo_1}} below.

\begin{algorithm}
	\caption{Source Focusing and Localization}
	\label{algo_1}
    \DontPrintSemicolon
    \SetKwInOut{Input}{Input}
    \SetKwInOut{Output}{Output}
    \SetKwInOut{Init}{Initialization}
	\Input{Hankel matrices: $H_t$, $1\le t\le T$, Noise level: $\sigma$, Sampling number: $K$.}
	\Input{Tolerance rate: $\epsilon$.}
	\Init{Initial matrix: $I_{T\times T}$.}
		\For {$j\ =\ 1:T$}{
		Initial data: $q^{\left(0\right)}[:,j] = I[:,j]$.\;
		Solve (\ref{opt_pro}) to obtain $\{q_{tn}\}$ and $f\left(q_{1,n},\cdots,q_{Tn}\right)$ until $f\left(q_{1,n},\cdots,q_{Tn}\right)<1+\epsilon$. Get the result $q^{\left(r\right)}[:,j]$ and corresponding $\tilde{H}_j$.\;
		Apply MUSIC to $\tilde{H}_j$, get source estimation $\tilde{y}_j$.\;
        }
		Apply the clean-up step to the estimations and divide the results into several clusters. Take the average over each cluster to get the reconstructed sources $\{\hat{y}_j\}_{p=1}^P$.\;	
	\Return $S = \{\hat{y}_j\}_{p=1}^P$.\;
\end{algorithm}

We note that in \textbf{Algorithm \ref{algo_1}}, $\Gamma$ is only used in (\ref{eq-gamma}) for the clean-up step but not the tolerance rate for the optimization problem. This is because the accuracy of the solution to (\ref{opt_pro}) can affect the accuracy of the recovered position of the selected source, while $\Gamma$ is derived in the "worst-case scenario". In practice, we need to choose a tolerance rate that is small enough. We believe that a more detailed analysis of the objective function $f$ may lead to a better choice of tolerance rate. 
We also notice that in Step 3, $\tilde{H_j}$ is the sum of a rank-one matrix and a noise matrix, which allows us to form the matrix $\tilde{H_j}$ with a small size to reconstruct the source position. This motivates us to develop a subsampling strategy, which will be discussed in Section \ref{sec:implementation}. In Section \ref{subsec:snr_filter}, we will see another advantage of the subsampling strategy.

\subsection{Analysis of the  the numerical focusing and localization method}{\label{subsec:mini_separation and error}}
In this subsection, we analyze the reconstruction error of the numerical source focusing and localization method. We aim to address the following two issues: the first is under what conditions on the ground truth sources one step of source focusing and localization can reconstruct a source that is close to one of the ground truth sources. The second is the reconstruction error. It will be shown that the first issue is closely related to the minimum separation distance between the ground truth sources, which also sets a theoretical limit for the method. 

To proceed, we assume the ground truth point sources are supported at $\{y_j\}_{j=1}^n$. We denote the source obtained from the numerical focusing and localization step as 
$\mu = M\delta_y$ with $M>0$. We 
write the numerically focused measurement as
$
\tilde Y = [\mu] + W',
$
where $W' := \tilde Y - [\mu]$. We assume that 
$$
\| W'\|_{\infty} < \sigma' 
$$ 
for some $\sigma' >0$.  In view of (\ref{tilde_Y_j}), we see that in the ideal focusing case, $M=|\tilde{a}_j|$, $\sigma'\leq \sqrt{T}\sigma$. 
For convenience, we introduce the following definition. 

\begin{defi}{\label{sigma-admissible}}
Given a source $\mu$, we say a collection of point sources $\{\delta_{y_j}\}_{j=1}^n$ is $\sigma$-admissible to $\mu$ if there exist complex numbers $\hat{a}_j, j=1, 2, \cdots n$, such that $\hat{\mu}=\sum_{j=1}^n \hat{a}_j\delta_{y_j}$ satisfies 
\[
\|[\hat{\mu}]-[\mu]\|_{\infty}<\sigma.
\]
\end{defi}

We have the following theorem which shows that under certain minimum separation conditions to the ground truth sources, the focused source $\mu$ is close to one of the ground truth sources.

\begin{thm}{\label{separation_dist_upper}}
Let $n\ge 2$,  a collection of point sources $\{\delta_{y_j}\}_{j=1}^n$ is supported on $[-\frac{\pi}{2\Omega},\frac{\pi}{2\Omega}]$ satisfying the following condition:
\begin{align}\label{separation distance}
    \tau=\min_{p\ne q}|y_p-y_q| \ge \frac{3.03\pi e}{\Omega}\left(\frac{\sigma'}{M}\right)^{\frac{1}{n}}.
\end{align}
If $\{\delta_{y_j}\}_{j=1}^n$ is $\sigma'$-admissible to $\mu = M\delta_y$,
then 
\[
\min_{1\le j\le n} |y-y_j|<\frac{\tau}{2}.
\]
\end{thm}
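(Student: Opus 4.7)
The plan is to argue by contradiction. Suppose, contrary to the conclusion, that $|y-y_j|\ge \tau/2$ for every $j=1,\ldots,n$. Then $y\notin\{y_1,\ldots,y_n\}$, and the augmented $(n+1)$-point set $\{y_1,\ldots,y_n,y\}\subset[-\pi/(2\Omega),\pi/(2\Omega)]$ has pairwise separation at least $\tau/2$: the original separation among the $y_j$'s is $\tau\ge\tau/2$, and the separation between $y$ and every $y_j$ is at least $\tau/2$ by the contradiction hypothesis. I would then form the signed measure
\[
\nu = \hat{\mu} - \mu = \sum_{j=1}^{n}\hat{a}_j\delta_{y_j} - M\delta_y,
\]
which carries an amplitude of magnitude $M$ at the node $y$. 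The $\sigma'$-admissibility hypothesis is precisely the statement that $\|\mathcal{F}\nu\|_{L^\infty[-\Omega,\Omega]} = \|[\hat{\mu}]-[\mu]\|_\infty < \sigma'$.

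The central ingredient is a quantitative Vandermonde-type lower bound drawn from the super-resolution/computational-resolution-limit literature (e.g.\ \cite{LIU2022402,9410626,li2019super}): for any signed discrete measure $\nu=\sum_{k=1}^{N}c_k\delta_{x_k}$ whose nodes lie in $[-\pi/(2\Omega),\pi/(2\Omega)]$ with minimum separation $\tau'$, one has an estimate of the form
\[
\max_{k}|c_k| \le \left(\frac{\pi e}{\tau'\Omega}\right)^{N-1}\|\mathcal{F}\nu\|_{L^\infty[-\Omega,\Omega]}.
\]
Instantiating this with $N=n+1$, $\tau'=\tau/2$, and the node $y$ (whose amplitude has magnitude $M$) gives $M\le (2\pi e/(\tau\Omega))^{n}\sigma'$, equivalently $\tau \le (2\pi e/\Omega)(\sigma'/M)^{1/n}$. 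The separation hypothesis~(\ref{separation distance}) instead reads $\tau \ge (3.03\pi e/\Omega)(\sigma'/M)^{1/n}$, and since $3.03>2$ these two inequalities are incompatible. The contradiction forces $\min_{1\le j\le n}|y-y_j|<\tau/2$.

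The main obstacle is the Vandermonde lemma itself with a constant sharp enough to beat the numerical $3.03\pi e$ in the hypothesis. The route I would take is to construct, for the distinguished node, the Lagrange-type interpolant $p(t)=\prod_{j\ne k}(t-x_j)/(x_k-x_j)$ that isolates $x_k$, produce a band-limited extension supported in $[-\Omega,\Omega]$ (for instance via a Beurling/Selberg-type majorant or a Jackson kernel), and pair it against $\mathcal{F}\nu$ through a Parseval-type identity to extract $c_k$. The relevant $L^1$-norm bound then reduces to controlling a product of the form $\prod_{k=1}^{N-1}(k\tau')$, whose sharp asymptotics via Stirling's formula are what produce the $\pi e$ factor. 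The slack $3.03>2$ is precisely there to absorb any polynomial-in-$n$ prefactor hidden in the implied constant of the Vandermonde lemma, so the contradiction scheme above goes through essentially without extra bookkeeping; the remaining pieces (packing of the augmented set, reduction to $\nu$) are elementary.
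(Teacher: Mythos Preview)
Your contradiction scheme is exactly the paper's: assume $y$ is at least $\tau/2$ from every $y_j$, bound $\|[\hat\mu]-[\mu]\|_\infty$ from below by a Vandermonde-type product over the augmented node set, and use Stirling to extract the $\pi e$ and contradict~(\ref{separation distance}). One correction worth flagging: admissibility in Definition~\ref{sigma-admissible} controls only the \emph{discrete} sample vector $[\hat\mu]-[\mu]$, not the continuous $\|\mathcal F\nu\|_{L^\infty[-\Omega,\Omega]}$ you wrote, so your Vandermonde lemma must be run on finitely many samples rather than via a continuous Parseval/dual-function pairing. The paper does precisely this: it extracts $n+1$ equispaced subsamples $\omega_{r,0},\ldots,\omega_{r,n}$, applies Theorem~3.1 of \cite{9410626} to lower-bound the $\ell_2$ distance from $M\phi_n(e^{i\theta})$ to the span of the $\phi_n(e^{i\theta_j})$ by $\tfrac{M}{2^n}\prod_j|e^{i\theta}-e^{i\theta_j}|$, then invokes a combinatorial product-of-distances lemma (Lemma~\ref{lem_2}) plus Stirling---rather than the Lagrange-interpolant/Beurling-majorant route you sketch. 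Both routes land on the same estimate up to polynomial-in-$n$ factors, and as you correctly note the $3.03$ is there to absorb that prefactor.
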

 
On the other hand, the proposition below shows that if the separation distance of the ground truth sources is below a certain threshold, then it is not guaranteed that the numerical-focused source is close to any of the ground truth sources. 

\begin{prop}{\label{separation_dist_lower}}
	For given $0<\sigma'<M$, and integer $n\ge2$, let
\begin{align}{\label{tau}}
	\tau = \frac{0.96e^{-\frac{3}{2}}}{\Omega}\left(\frac{\sigma'}{M}\right)^{\frac{1}{n}}.
\end{align}
	For uniformly separated point sources $\{\delta_{y_j}\}_{j=1}^n$ with distance $\tau$. There exist $y_k\in\{y_j\}_{j=1}^n$ such that $\mu=M\delta_k$, $\hat{\mu}=\sum_{j\ne k}\hat{a}_j\delta_{y_j}$ satisfying $\|[\mu]-[\hat{\mu}]\|_{\infty}<\sigma'$.
\end{prop}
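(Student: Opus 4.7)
The argument will be constructive: I pick the index $k$ and define the coefficients $\{\hat a_j\}_{j\neq k}$ by Lagrange interpolation, then control the Fourier residual by the divided-difference remainder. Write the sources as $y_j = (j - \tfrac{n+1}{2})\tau$, $j = 1,\ldots,n$, and choose $y_k$ at the centre of the configuration, i.e.\ $k = \lceil (n+1)/2\rceil$; this is the choice that minimises $\prod_{j\neq k}|y_k - y_j|$ over all $k$. With $k$ fixed, let
\[
\ell_j(y) \;=\; \prod_{i\neq j,\,k}\frac{y - y_i}{y_j - y_i}, \qquad j\neq k,
\]
be the Lagrange basis polynomials of degree $n-2$ attached to the remaining $n-1$ nodes, and define $\hat a_j := M\,\ell_j(y_k)$. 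By the defining property of $\ell_j$, the first $n-1$ moments of $\mu$ and $\hat\mu := \sum_{j\neq k}\hat a_j \delta_{y_j}$ agree: $\sum_{j\neq k}\hat a_j\, p(y_j) = M\,p(y_k)$ for every polynomial $p$ of degree $\le n-2$.

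Next, note that the Fourier residual
\[
g(\omega) \;:=\; [\mu](\omega) - [\hat\mu](\omega) \;=\; M e^{iy_k\omega} - \sum_{j\neq k}\hat a_j e^{iy_j\omega}
\]
is exactly $M$ times the Lagrange interpolation error of $f_\omega(y):= e^{iy\omega}$ at the nodes $\{y_j\}_{j\neq k}$ evaluated at $y_k$. By Newton's remainder formula, $g(\omega) = M\prod_{j\neq k}(y_k - y_j)\cdot f_\omega[y_k, y_1,\ldots,y_{k-1},y_{k+1},\ldots,y_n]$, where the final factor is the divided difference of $f_\omega$. The Hermite--Genocchi integral representation expresses this divided difference as $(i\omega)^{n-1}$ integrated against the uniform probability measure on the simplex $\Delta_{n-1}$, which has volume $1/(n-1)!$; hence $|f_\omega[\,\cdots\,]| \le |\omega|^{n-1}/(n-1)!$ and
\[
|g(\omega)| \;\le\; \frac{M\,\Omega^{n-1}}{(n-1)!}\prod_{j\neq k}|y_k - y_j|, \qquad \omega\in[-\Omega,\Omega].
\]
For uniform spacing with $k$ at the centre, $\prod_{j\neq k}|y_k - y_j| \le \lfloor\tfrac{n-1}{2}\rfloor!\,\lceil\tfrac{n-1}{2}\rceil!\,\tau^{n-1}$, and Stirling's formula shows the factorial ratio $\lfloor\tfrac{n-1}{2}\rfloor!\lceil\tfrac{n-1}{2}\rceil!/(n-1)!$ is controlled by $\sqrt{\pi n/2}/2^{\,n-1}$. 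Consequently $|g(\omega)|$ is bounded by $M\sqrt{\pi n/2}\,(\Omega\tau/2)^{n-1}$. Substituting the prescribed $\tau = 0.96\,e^{-3/2}(\sigma'/M)^{1/n}/\Omega$ reduces the admissibility inequality $|g(\omega)| < \sigma'$ to a one-variable numerical estimate in $x := \sigma'/M$, which the specific constant $0.96\,e^{-3/2}$ is calibrated to satisfy.

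\textbf{Main obstacle.} The Lagrange estimate naturally produces an error scaling like $(\Omega\tau)^{n-1}$, whereas the proposition parametrises $\tau$ with a $1/n$ exponent in $\sigma'/M$. Reconciling the two---i.e.\ verifying that the concrete constant $0.96\,e^{-3/2}$ is small enough to ensure $\sqrt{\pi n/2}\,(0.48\,e^{-3/2})^{n-1}\,(M/\sigma')^{1/n} < 1$ for every $n\ge 2$ in the relevant range of $\sigma'/M$---is where the delicate arithmetic lives; everything else in the argument is structural. I would handle this by tracking the Stirling constants explicitly and optimising the inequality in $n$ by hand, with the centre choice of $k$ and the explicit $\hat a_j = M\ell_j(y_k)$ being the two ingredients that make the computation concrete.
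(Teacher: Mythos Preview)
Your construction is natural, but the ``main obstacle'' you flag is not a matter of calibrating constants --- it is a genuine exponent mismatch that cannot be repaired within your setup. With $n-1$ interpolation nodes (the $y_j$ with $j\neq k$), Lagrange interpolation reproduces polynomials only up to degree $n-2$, so your remainder bound scales like $(\Omega\tau)^{\,n-1}$. Substituting the prescribed $\Omega\tau = 0.96\,e^{-3/2}(\sigma'/M)^{1/n}$ gives
\[
|g(\omega)| \;\lesssim\; M\,(\sigma'/M)^{(n-1)/n} \;=\; M^{1/n}\,\sigma'^{\,1-1/n},
\]
and the required inequality $|g(\omega)|<\sigma'$ becomes $C_n\cdot(M/\sigma')^{1/n}<1$ for some fixed $C_n>0$. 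Since $M/\sigma'$ ranges over all of $(1,\infty)$, this fails as soon as $\sigma'/M$ is small enough; no constant in front of $\tau$ can absorb the divergent factor $(M/\sigma')^{1/n}$. In short, matching $n-1$ moments produces error $\asymp(\Omega\tau)^{n-1}$, while the scaling $\tau\propto(\sigma'/M)^{1/n}$ in the proposition demands error $\asymp(\Omega\tau)^{n}$.

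The paper achieves the correct $(\Omega\tau)^{n}$ scaling by working with one additional point. It lays down $n+1$ equispaced nodes (for even $n=2p$ these are $-p\tau,\ldots,p\tau$), solves the underdetermined $n\times(n+1)$ Vandermonde system $Aa=0$ to obtain a signed measure $\gamma=\sum_j a_j\delta_{y_j}$ with \emph{$n$} vanishing moments, and then picks $y_k$ to be the node where $|a_j|$ is minimal, normalising so that $|a_k|=M$. The Taylor expansion of $\mathcal F\gamma(\omega)$ now starts at degree $n$, giving $|\mathcal F\gamma(\omega)|\lesssim M(\Omega\tau)^{n}\propto\sigma'$, and the constant $0.96\,e^{-3/2}$ is exactly what closes that estimate. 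The extra node --- and with it the extra vanishing moment --- is the missing ingredient in your plan; once you add it, your Lagrange/divided-difference machinery would go through with the remainder now controlled by the $n$-th derivative rather than the $(n-1)$-th.
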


Finally, we consider the reconstruction error. Under the assumption that perfect focusing is achieved (see  (\ref{tilde_Y_j})), we have the following estimation.
\begin{thm}{\label{error_upper}}
Let $\mu=M\delta_y$ be the focused point source supported on $[-\frac{\pi}{2\Omega},\frac{\pi}{2\Omega}]$. If $\{\delta_{y_k}\}$ is $\sigma'$-admissible to $\mu$ for some $k$, then 
\begin{align}\label{error_est}
    |y-y_k|< \frac{\pi}{\Omega}\left(\frac{\sigma'}{M}\right). 
\end{align}
\end{thm}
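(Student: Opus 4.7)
The plan is to unpack the $\sigma'$-admissibility directly into a pointwise estimate on $[-\Omega,\Omega]$ and then extract the positional information by evaluating at the two endpoints $\omega=0$ and $\omega=\Omega$. By Definition \ref{sigma-admissible}, there exists a complex amplitude $\hat{a}_k$ such that
\[
\bigl|\hat{a}_k e^{iy_k\omega}-Me^{iy\omega}\bigr|<\sigma'\quad \text{for all } \omega\in[-\Omega,\Omega].
\]
Multiplying by the unit-modulus factor $e^{-iy\omega}$ preserves absolute values, so this is equivalent to $|\hat{a}_k e^{i(y_k-y)\omega}-M|<\sigma'$ on $[-\Omega,\Omega]$.

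The next step is to pin down $\hat{a}_k$: setting $\omega=0$ gives $|\hat{a}_k-M|<\sigma'$. Substituting this back and using the triangle inequality yields
\[
M\bigl|e^{i(y_k-y)\omega}-1\bigr|\le |\hat{a}_k e^{i(y_k-y)\omega}-M|+|\hat{a}_k-M|\,|e^{i(y_k-y)\omega}|<2\sigma',
\]
so that $2M\bigl|\sin\bigl(\tfrac{(y_k-y)\omega}{2}\bigr)\bigr|<2\sigma'$ for every $\omega\in[-\Omega,\Omega]$.

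Finally, I would specialize to $\omega=\Omega$. Because $y,y_k\in[-\tfrac{\pi}{2\Omega},\tfrac{\pi}{2\Omega}]$, the argument satisfies $\bigl|\tfrac{(y_k-y)\Omega}{2}\bigr|\le \tfrac{\pi}{2}$, which is exactly the range where the Jordan-type inequality $|\sin x|\ge \tfrac{2|x|}{\pi}$ applies. Invoking it gives $M\cdot \tfrac{|y-y_k|\Omega}{\pi}<\sigma'$, which rearranges to the claimed bound $|y-y_k|<\tfrac{\pi}{\Omega}\bigl(\tfrac{\sigma'}{M}\bigr)$.

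There is no real obstacle; the only subtlety is the verification that the sine argument stays in $[-\tfrac{\pi}{2},\tfrac{\pi}{2}]$ so that the linear lower bound on $|\sin x|$ is valid, and this is precisely what the support hypothesis $y,y_k\in[-\tfrac{\pi}{2\Omega},\tfrac{\pi}{2\Omega}]$ supplies. One could also note that evaluating at $\omega=-\Omega$ gives the same bound, and more generally the estimate at a generic $\omega$ shows that the constant $\pi$ is close to optimal within this technique: any improvement would require either a stronger structural use of admissibility (e.g., higher-order derivatives in $\omega$) or tighter control of $\hat{a}_k$ beyond the crude $\omega=0$ evaluation.
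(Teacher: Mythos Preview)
Your argument is correct and rests on the same two ingredients as the paper's proof: evaluation at two sample points (one of them $\omega=0$, the other $\omega=\pm\Omega$) together with the chord/Jordan inequality $|e^{i\theta}-e^{i\hat\theta}|\ge\tfrac{2}{\pi}|\theta-\hat\theta|$ on $[-\tfrac{\pi}{2},\tfrac{\pi}{2}]$. The execution differs slightly: the paper argues by contrapositive, passes from the $\ell_\infty$ to the $\ell_2$ norm on the two-component vector, and invokes an external result (Theorem~3.1 of \cite{9410626}) to carry out the minimization over the unknown amplitude; you instead pin down $\hat{a}_k$ explicitly from the $\omega=0$ evaluation and finish with a bare triangle inequality. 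Your route is more elementary and fully self-contained, at no cost in the final constant.

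One small wording issue: $\sigma'$-admissibility in Definition~\ref{sigma-admissible} is formulated on the discrete sample grid $\{\omega_k\}$, not for all $\omega\in[-\Omega,\Omega]$ as you write. Since you only use $\omega_0=0$ and $\omega_K=\Omega$, both of which are grid points, the logic is unaffected; just adjust the phrasing accordingly.
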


\section{Annihilating Filter Based Source Filtering}{\label{sec:source removal}}
In this section, we propose an algorithm to remove the recovered point source/sources, which paves the way for a complete reconstruction. 

\subsection{Source Filtering Algorithm}
Literature usually uses the annihilating filters to achieve signal reconstruction, see \cite{1003065}
\cite{7547372}\cite{doi:10.1137/15M1042280}. In contrast, we apply annihilating filters to remove the reconstructed point source/sources. 
To illustrate the idea, we first look at a toy example for the signal processed by an annihilating filter. 
\begin{eg}
	Suppose we evenly sample a signal generated by a single point source with position $z$ and amplitude $a$, following the setup in section \ref{sec:source focusing}. We have
	\begin{align}
	Y = \left(ae^{iz\omega_{-K}},ae^{iz\omega_{-K+1}},\cdots,ae^{iz\omega_{K}}\right).
	\end{align}
	We then define a filter, $F$, by $F = \left(1,-e^{iz\frac{\Omega}{k}}\right)$. Calculating the discrete convolution of $Y$ and $F$, we derive
	\begin{align}
	Y\ast F=\left(ae^{-iz\Omega},0,0,\cdots,0,-ae^{iz\frac{K+1}{K}\Omega}\right).
	\end{align}
\end{eg}
We observe that though there is a boundary effect due to discrete convolution, the middle part of $Y\ast F$ shows the complete filtering of the source. Based on this observation, we design the following annihilating filter for the original measurements:
\begin{align}
	F = \left(1,-e^{i\hat{y}_1\frac{\Omega}{K}}\right)\ast\left(1,-e^{i\hat{y}_2\frac{\Omega}{K}}\right)\ast\cdots\ast\left(1,-e^{i\hat{y}_P\frac{\Omega}{K}}\right),
\end{align}
where the convolution is in the discrete sense and $\{\hat y_p\}_{p=1}^P$ are derived in \textbf{Algorithm \ref{algo_1}}.
We denote $|\alpha|$ the length of the vector $\alpha$ and $\alpha(r:s)$ the vector that extracts the $r$-th to the $s$-th elements of $\alpha$. Notice that $|F|=P+1$ and $|Y\ast F|=2K+P+1$, the boundary effect occurs at the first $P$ elements and also the last $P$ elements of $Y_t\ast F$. We, therefore, pick the middle part as the processed measurements with the explicit form 
$$
Y_t'= \left(Y_t\ast F\right)\left(P+1:2K+1\right), \quad t=1,\cdots,T.
$$
The pseudo-code is given in 
\textbf{Algorithm \ref{algo_2}} below. 
\begin{algorithm}
	\caption{Source Filtering}
	\label{algo_2}
	\DontPrintSemicolon
    \SetKwInOut{Input}{Input}
    \SetKwInOut{Output}{Output}
    \SetKwInOut{Init}{Initialization}
	\Input{Original measurements: $\{Y_t\}_{t=1}^T$, Recovered sources: $S =\{\hat{y}_p\}_{p=1}^P$, sampling distance $h = \frac{\Omega}{K}$.}
		Form the filter $F=\left(1,-e^{i\hat{y}_1h}\right)\ast\cdots\ast\left(1,-e^{i\hat{y}_Ph}\right)$.\;
		\For {$t\ =\ 1:T$}{
		Compute the discrete convolution $Y_t\ast F$\;
		Let $Y'_t=Y_t\ast F\left(P+1:2K+1\right)$.\;
		Let $s = \text{arg}\max_m\{s=2m-1,s\le 2K-|S|+1\}$, form $s\times s$ Hankel matrices $H'_t$ using $Y_t'\left(1:2s-1\right)$.\;
		}
	\Return $\{Y_t'\}_{t=1}^T$, $\{H'_t\}_{t=1}^T$.\;
\end{algorithm}

\subsection{Analysis on SNR}
{\label{subsec:snr_filter}}

In this subsection, we estimate the SNR of the signal obtained from the source filtering algorithm introduced in the previous subsection. The estimate will shed light on the reconstruction error for the next step. 
We first define $A_j$ by
\begin{align}
A_j = \prod_{p=1}^P \left(e^{iy_j\frac{\Omega}{K}}-e^{i\hat{y}_p\frac{\Omega}{K}}\right),\ j=1,2,\cdots,n,  
\end{align}
and $Q\left(x\right)$ by
\begin{align}
	Q\left(x\right) &= \prod_{p=1}^{P}\left(x-e^{i\hat{y}_p\frac{\Omega}{K}}\right) \triangleq c_0+c_1x+\cdots+c_Px^P.
\end{align}
To write $Y'$ more explicitly, we have
\begin{align}{\label{Y't}}
	Y_t'\left(s\right) = \sum_{j=1}^n A_ja_jI_t(y_j)e^{-iy_j\left(1-\frac{s}{K}\right)\Omega}+W'\left(s\right),\ s=0,1,\cdots,2K-P.  
\end{align}
From (\ref{Y't}) we see that both $\{A_j\}$ and $W'$ affect the SNR after filtering, which leads us to analyze $\{A_j\}$ and $W'\left(s\right)$ separately.

For $W'\left(s\right)$, we have the following worst-case estimation
\begin{align}{\label{W'_est}}
|W'\left(s\right)| &= |\sum_{l=0}^P c_{l}W\left(s+l\right)| 
\le \sigma \sum_{l=0}^{P} |c_l| 
=\sigma \sum_{l=0}^{P} |
\sum_{\substack{S\subset \{1,\cdots,P\}\\ |S|=l}}
\prod_{p\in S} \left(-e^{i\hat{y}_p\frac{\Omega}{K}}\right) | \notag\\
&\le \sigma \sum_{l=0}^{P} C_P^l=2^P\sigma.
\end{align}

For ${A_j}$, since $\{y_j\}$ are closely spaced and the sampling distance $\frac{\Omega}{K}$ is also small, we have the following estimation:
\begin{align}{\label{Aj_est}}
|A_j| &=\prod_{p=1}^P |e^{iy_j\frac{\Omega}{K}}-e^{i\hat{y}_p\frac{\Omega}{K}}| \notag\\
      &=\prod_{p=1}^{P}\sqrt{2-2\cos \left(\left(y_j-\hat{y}_p\right)\frac{\Omega}{K}\right)}\notag \\
      &\sim \left(\frac{\Omega}{K}\right)^P\prod_{p=1}|y_j-\hat{y}_p|.
\end{align}
We observe that in the ideal case, $y_j \in \{\hat{y}_p\}_{p=1}^P$, we will have $A_j=0$ which implies that the source positioned at $y_j$ is completely filtered out. Therefore, the accuracy of the source localization in \textbf{Algorithm \ref{algo_1}} plays an important role in the source filtering step. Inaccurate source filtering may cause additional errors.
The results in (\ref{W'_est}) and (\ref{Aj_est}) imply that the number of recovered sources and sampling distance have effects on SNR after filtering. 
We also notice that from (\ref{Aj_est}), the subsampling strategy can reduce the effect due to the increased value of $\frac{\Omega}{K}$. In addition, the source filtering step reduces the number of point sources to be reconstructed in the next step. Both ensure that the iterative strategy of IFF to reconstruct all the point sources is achievable.

\section{Implementation and Discussion}{\label{sec:implementation}}

In this section, we present the implementation of the IFF algorithm, which combines \textbf{Algorithm \ref{algo_1}} and \textbf{Algorithm \ref{algo_2}}. Meanwhile, we point out a rule for the choice of stopping criterion in the iterations and propose a subsampling strategy. We conclude this section with a discussion of the algorithm.

\subsection{Details of Implementation}
We first notice that we can achieve a complete reconstruction iteratively only with a proper stopping criterion. Naturally, the recovered source should be capable of generating all $T$ measurements with errors below the noise level. The stopping criterion can be defined as follows:
\begin{align}
	\max_{1\le t\le T} \min_{\hat{a}_{t,j}\in\mathbb{C}} \|[\hat \mu_t]- Y_t\|_{2}< \sqrt{2K+1}\sigma,
\end{align}  
where $\hat{\mu_t} = \sum_{j=1}^n \hat{a}_{t,j}\delta_{\hat{y}_j}$.
We display the pseudo-algorithm for the IFF algorithm in \textbf{Algorithm \ref{algo_3}}.
\begin{algorithm}
 	\caption{IFF Algorithm}
 	\label{algo_3}
 	\DontPrintSemicolon
    \SetKwInOut{Input}{Input}
    \SetKwInOut{Output}{Output}
    \SetKwInOut{Init}{Initialization}
 	\Input{Sampling number: $K$, Original measurements: $\{Y_t\}_{t=1}^T$, noise level: $\sigma$.}
 	\Init{$S^{\left(0\right)}=\emptyset$, $\{Y_t^{\left(0\right)}\}=\{Y_t\}$, $r=0$.}

 		\While {$\gamma \ge \sqrt{2K+1}\sigma$}{
 		Proceed \textbf{Algorithm \ref{algo_2}} with input $\{Y_t^{\left(r\right)}\}$ and $S^{\left(r\right)}$, get the output $\{Y_t^{\left(r+1\right)}\}$ and $\{H_t^{\left(r\right)}\}$.\;
 
 		Proceed \textbf{Algorithm \ref{algo_1}} with input Hankel matrices $\{H_t^{\left(r+1\right)}\}$ and noise level $\sigma^{\left(r\right)}$, get the output $S'$. Let $S^{\left(r+1\right)} = S^{\left(r\right)}\cup S'$.\;
 		
 		$\gamma = \max_{1\le t\le T} \min_{\hat{a}_{t,j}} \|\mathcal F[\hat \mu_t]- Y_t\|_{2}$, where  $
 \text{supp}\ \hat{\mu}_t=S^{\left(r+1\right)}$.\;
 		}
 	\Return A set of reconstructed point sources: $S$.\;
\end{algorithm}
We point our that in \textbf{Algorithm \ref{algo_1}}, the input $\sigma^{\left(r\right)}$ should be adjusted in each iteration. This is mainly because of the change of SNR induced by annihilating filters, as discussed in Section \ref{sec:source removal}. Motivated by the estimation in Section \ref{subsec:snr_filter}, one choice of $\sigma^{\left(r\right)}$ is
$$
\sigma^{\left(r\right)} = C\left(\frac{\Omega}{Kd}\right)^{|S^{\left(r\right)}|}\sigma,
$$ 
where $C$ is an order-one constant and $d$ is a prior guess for the cluster distance either based on the reconstruction result from step 3 in \textbf{Algorithm \ref{algo_3}} or other prior information. We point out that the choice of $\sigma^{(r)}$ can be flexible in practice. 

\subsection{The subsampling strategy}
We now discuss the subsampling strategy mentioned in previous sections. We notice that the MUSIC algorithm in \textbf{Algorithm \ref{algo_1}} has a huge computational cost, especially when the size of the Hankel matrix is large. In the source localization step, however, we only need to handle nearly rank-one Hankel matrices due to the numerical focusing step. It motivates us to use a few sample points to form Hankel matrices. The subsampling strategy also helps preserve the SNR in \textbf{Algorithm \ref{algo_2}}, as described in Section \ref{sec:source removal}. To achieve subsampling for a given restricted row number of the Hankel matrix, $m$, we select the maximum sampling distance. It is worth noting that, owing to the source focusing step, one may even choose $m=2$ or $m=3$. See the numerical example in Section \ref{numerical results}. We summarize this subsampling approach in the  
pseudo-code of \textbf{Algorithm \ref{algo_2'}} Below.

\begin{algorithm}
	\caption{Source Filtering with Subsampling}
	\label{algo_2'}
	\DontPrintSemicolon
    \SetKwInOut{Input}{Input}
    \SetKwInOut{Output}{Output}
    \SetKwInOut{Init}{Initialization}
	\Input{Original measurements: $\{Y_t\}_{t=1}^T$, Recovered sources: $S =\{\hat{y}_p\}_{p=1}^P$, sampling distance $h = \frac{\Omega}{K}$, row number of Hankel matrix: $m$.}
        Calculate subsampling distance: $h' = \lceil \frac{|Y_t|}{2m-1+|S|} \rceil \cdot h$.\;
		Form the filter $F=\left(1,-e^{i\hat{y}_1 h'}\right)\ast\cdots\ast\left(1,-e^{i\hat{y}_P h'}\right)$.\;
		\For {$t\ =\ 1:T$}{
		Compute the discrete convolution $Y_t\ast F$\;
		Let $Y'_t=Y_t\ast F\left(P+1:2K+1\right)$.\;
        Compute the subsampling factor $s=\lceil \frac{|Y'_t|}{2m-1}\rceil$, the column number of Hankel matrix $l=[\frac{|Y'_t|}{s}]-m+1$.\;
        Form $m\times l$ Hankel matrix $H'_t$ using $Y_t'\left(1:s:|Y_t'|\right)$.\;
		}
	\Return $\{Y_t'\}_{t=1}^T$, $\{H'_t\}_{t=1}^T$.\;
\end{algorithm}

We note that in Steps 6 and 7 of \textbf{Algorithm \ref{algo_2'}}, the Hankel matrix is formed based on the principle that we minimize the number of uniformly distributed sample points while retaining as much high-frequency data as possible. By substituting \textbf{Algorithm \ref{algo_2}} with \textbf{Algorithm \ref{algo_2'}} in \textbf{Algorithm \ref{algo_3}}, the IFF Method with subsampling can be obtained.


\subsection{Discussion of IFF Algorithm}
In this subsection, we briefly discuss the IFF algorithm. First, we point out that from numerical experiments, the IFF algorithm is not sensitive to the choice of hyperparameters such as $\Gamma$ in \textbf{Algorithm \ref{algo_1}} and $\sigma^{(r)}$ in \textbf{Algorithm \ref{algo_3}}.
In the IFF algorithm, \textbf{Algorithm \ref{algo_1}} plays an essential role in the accuracy and efficiency of the whole method. It also assumes 
most of the computational power as source localization can be achieved using a subspace method applied to a small-sized Hankel matrix (even for $m=2$).
The acceleration of \textbf{Algorithm \ref{algo_1}} can be achieved in the following ways. First, we notice that optimization problems for different initial guesses are mutually independent and thus can be solved parallelly. Second, techniques to accelerate the optimization algorithm can also be applied to improve both efficiency and accuracy.

We also remark that the IFF algorithm presented here is only suitable for the case of point sources in a region, say of size comparable to several Rayleigh lengths, and the number of measurements is greater than the source number. For the general case of point sources in a large region and with multiscale structures, one can first decompose each measurement into "local measurements" that correspond to point sources in a small region or cluster, and then apply the IFF algorithm to each local measurement to reconstruct the point sources therein. The decomposition technique has been developed in \cite{liu2022measurement} and the 
scanning technique will be developed in a forthcoming paper.

In summary, IFF is an efficient algorithm that solves the local super-resolution problem with multiple measurements. We believe the idea can be generalized to other related problems. 

\section{Numerical Study}{\label{sec:numerical experiment}}
In this section, we conduct several groups of numerical experiments to show the phase transition phenomenon and the numerical behavior. For the IFF Method, in the experiments, we apply Nesterov Accelerated Gradient to solve the non-convex optimization problem proposed in (\ref{opt_pro}).
\subsection{Phase Transition Phenomenon}
We first introduce the super-resolution factor (SRF) which characterizes the ill-posedness of the super-resolution problem \cite{candes2014towards}. In the off-the-grid setting, SRF is defined by the ratio of Rayleigh limit and minimum separation distance. In our problem, the Rayleigh limit is $\frac{\pi}{\Omega}$ and the corresponding super-resolution factor is 
\begin{align}{\label{SRF}}
    SRF:= \frac{\pi}{\Omega \tau},
\end{align}
where $\tau=\min_{p\neq j}|y_p-y_j|$.

For the super-resolution problem with multiple measurements, we notice that the worst case is that all the illumination patterns are linearly dependent and the measurements are essentially one measurement. In such a case,  
the computational resolution is of the order $\mathcal{O}\left(\frac{\sigma}{m_{\min}}\right)^{\frac{1}{2n-1}}$, see \cite{9410626}. Here, $m_{\min} = \min_{j=1}|a_j|$.
The best case is that each measurement is associated with a single ``illuminated'' source. Following the discussions in Section \ref{subsec:mini_separation and error}, we predict that a phase transition phenomenon may occur. More precisely, we have demonstrated that a stable reconstruction is guaranteed if
\begin{align*}
    \log\left(SNR\right)>\left(2n-1\right)\log\left(SRF\right)+ \text{const}_1,
\end{align*}
and may fail if 
\begin{align*}
    \log\left(SNR\right)<n\log\left(SRF\right)+ \text{const}_2.
\end{align*}
Therefore, in the parameter space of $\log\left(SNR\right)-\log\left(SRF\right)$, there exist two lines with slopes $2n-1$ and $n$ respectively such that the cases above the former one are successfully reconstructed and the cases below the latter one are failed. As for the cases belonging to the intermediate region, the reconstruction result depends on the specifics of the illumination matrix. The phase transition is clearly illustrated in the numerical result below.

We set $\Omega=1$, and uniformly set 4 point sources in $[-1,1]$ with separation distance $\tau$, amplitude $1$, and the noise level $\sigma$. We set the illumination matrix,$L$, consisting of independent and identically distributed random variables. We perform 1000 times experiments with random choices of $d$, $L$, and $\sigma$ to reconstruct the point sources using \textbf{Algorithm \ref{algo_3}} and each time we use 10 measurements. Figure \ref{Fig:phase transition} shows the separation of the red points (successful reconstruction) and blue points (unsuccessful reconstruction) by two lines with slope $n$ and $2n-1$ respectively. The region in between is the phase transition region.
\begin{figure}[H]
\centering
\includegraphics[width=0.5\textwidth]{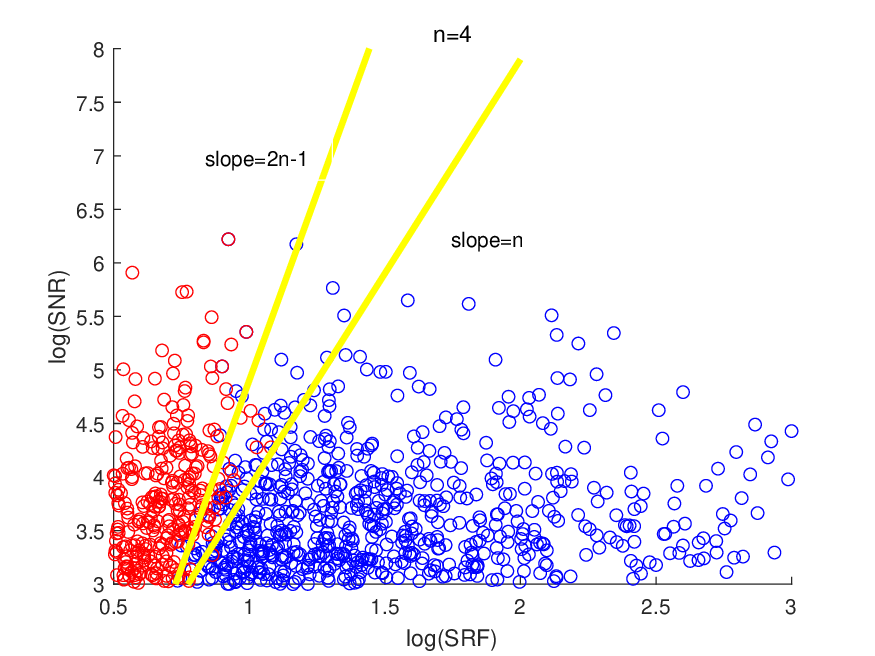}
\caption{Plot of successful and unsuccessful point source reconstruction by \textbf{Algorithm \ref{algo_3}} in the parameter space $\log\left(SNR\right)-\log\left(SRF\right)$. The red one represents the successful case and the blue one represents the unsuccessful case.}
\label{Fig:phase transition}
\end{figure}

\subsection{Reconstruction Behavior}
To demonstrate the numerical reconstruction ability, we first try the IFF Method on a challenging regime. We set $n=4$, $\Omega=1$, $\sigma=10^{-4}$ and 
\begin{align*}
    \mu = \delta_{-0.75}+\delta_{-0.25}+\delta_{0.25}+\delta_{0.75},
\end{align*}
For the illumination matrix, $L$, we set each element to be i.i.d. random variable that follows $\mathcal{U}[1,1+\sqrt{3}]$. Under this setup, we test the traditional subspace method using a single measurement. The result shows that the reconstructed source number is three, and the reconstructed source position is therefore inaccurate. We perform 1000 random experiments using the IFF Method (the randomness comes from the illumination and noise from the measurement). Each time, we use 10 measurements to reconstruct the point source's position. The mean of position from 1000 experiments is $\left(\hat{y}_1,\hat{y}_2,\hat{y}_3,\hat{y}_4\right)=\left(-0.7486,-0.2486,0.2481,0.7490\right)$. To compare with the standard method in multi-illumination LSE that aligns all the measurements and then applies the subspace method. We apply the standard method to the same data. It should be pointed out that throughout the paper when we apply the the Aligned MUSIC Method, we provide the prior information of the source number.
The numerical result shows the mean of position is $\left(\hat{y}_1,\hat{y}_2,\hat{y}_3,\hat{y}_4\right)=\left(-0.7497,-0.2492,0.2493,0.7496\right)$. The variance of the two methods is both in the order $\mathcal{O}\left(10^{-4}\right)$. Considering the noise level is $10^{-4}$, we claim the numerical behavior of \textbf{Algorithm \ref{algo_3}} is comparable to the standard method.
\subsection{Running Time of the IFF Method}
To test the running time of the IFF Method, we set $\Omega = 1$, $n=3$, $SNR\sim \mathcal{O}\left(10^{2}\right)$, and
\[
\mu = \delta_{-0.9}+\delta_{0}+\delta_{0.9}.
\]
In the experiment of the IFF Method, we fix the Hankel matrix size to be 2-by-2 and we do not perform the parallel computing of \textbf{Algorithm \ref{algo_1}}. In each setup, we conduct 10 random experiments, Figure \ref{Fig:number-time} shows the averaged running time of two methods. We see the running time of the two methods is comparable. We point out that the IFF Method can be further accelerated by parallel computing and better choices of step size in solving the optimization problem. Meanwhile, we can replace MUSIC with MP method, which is more computationally efficient, in \textbf{Algorithm \ref{algo_1}}. We expect the IFF Method to have a greater advantage in terms of speed in two-dimensional super-resolution problems, which will be our future work.
\begin{figure}[H]
\centering
\includegraphics[width=0.5\textwidth]{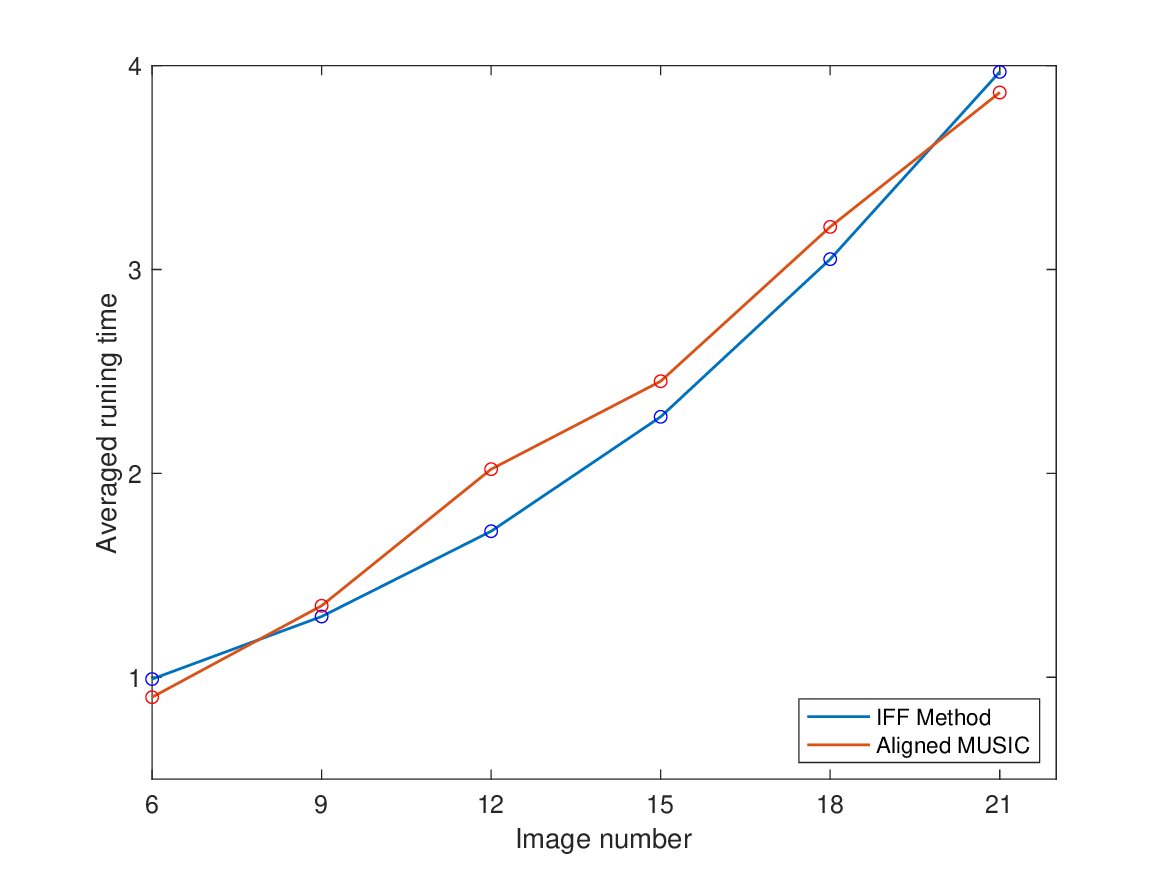}
\caption{Plot of averaged running time of the IFF Method and traditional aligned MUSIC method with increasing image number.}
\label{Fig:number-time}
\end{figure}

Next, we test the IFF Method with the application of parallel computing for the optimization problem with different initial guesses. We apply parallel computing for the IFF Method using 8 cores. Three groups of experiments are conducted to compare the IFF Method and the Aligned MUSIC Method in terms of efficiency and performance.

We first set $\Omega = 1$, $n=3$, $T=10$, $SNR\sim \mathcal{O}\left(10^{2}\right)$, and
\[
\mu = \delta_{-1}+\delta_{0}+\delta_{1}.
\]
We fix the Hankel matrix size to be 2-by-2. We conduct 50 random experiments for the Aligned MUSIC Method and the IFF Method respectively. 
Denote $y$ as the ground truth and $\hat{y}$ as the reconstruction result, we define the $\ell_2$ relative error to be
\begin{align}
    e = \frac{\|\hat{y}-y\|_2}{\|y\|_2}.
\end{align}
The comparison of averaged running time and reconstruction error is shown in Figure \ref{Fig:three_compare}.
\begin{figure}[ht]
	\centering
    \subfloat[Averaged running time comparison.]{
	\includegraphics[width=0.4\textwidth]{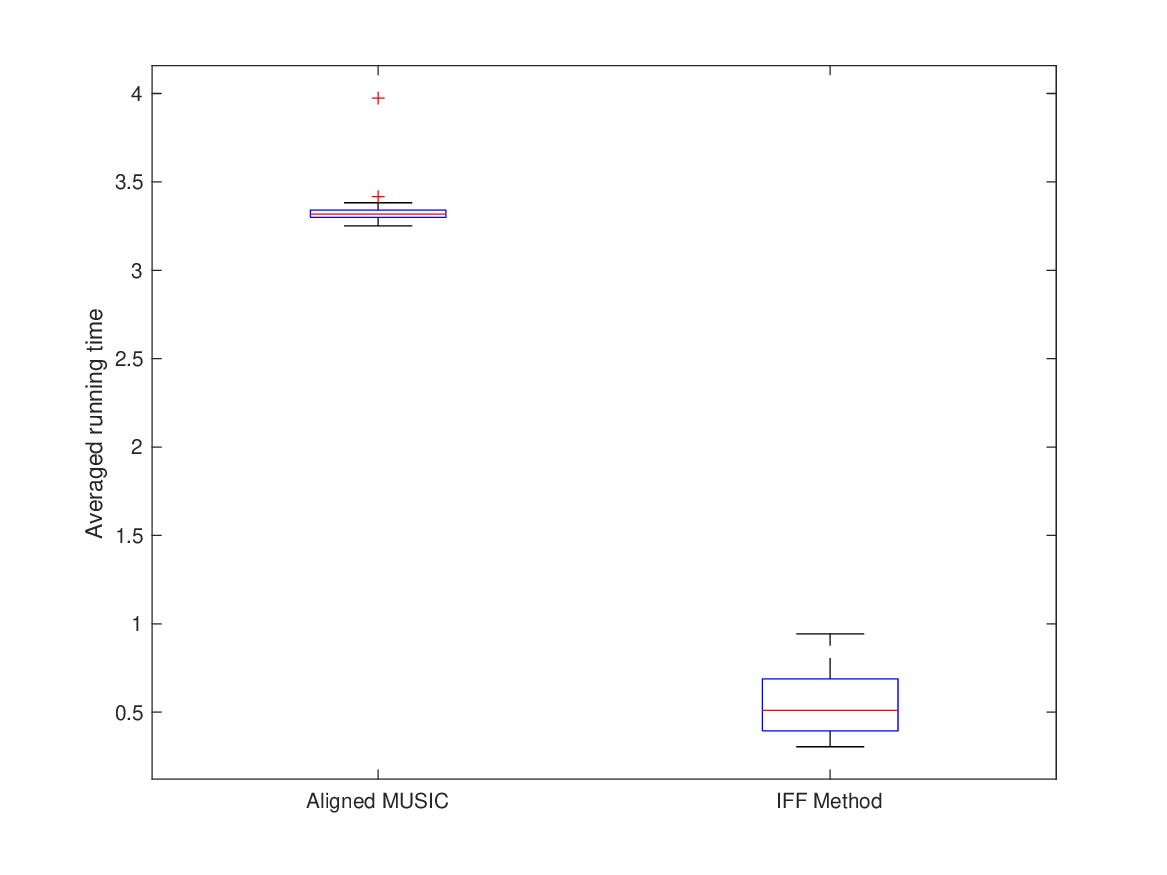}}
    \quad
    \subfloat[Reconstruction error comparison.]{
	\includegraphics[width=0.4\textwidth]{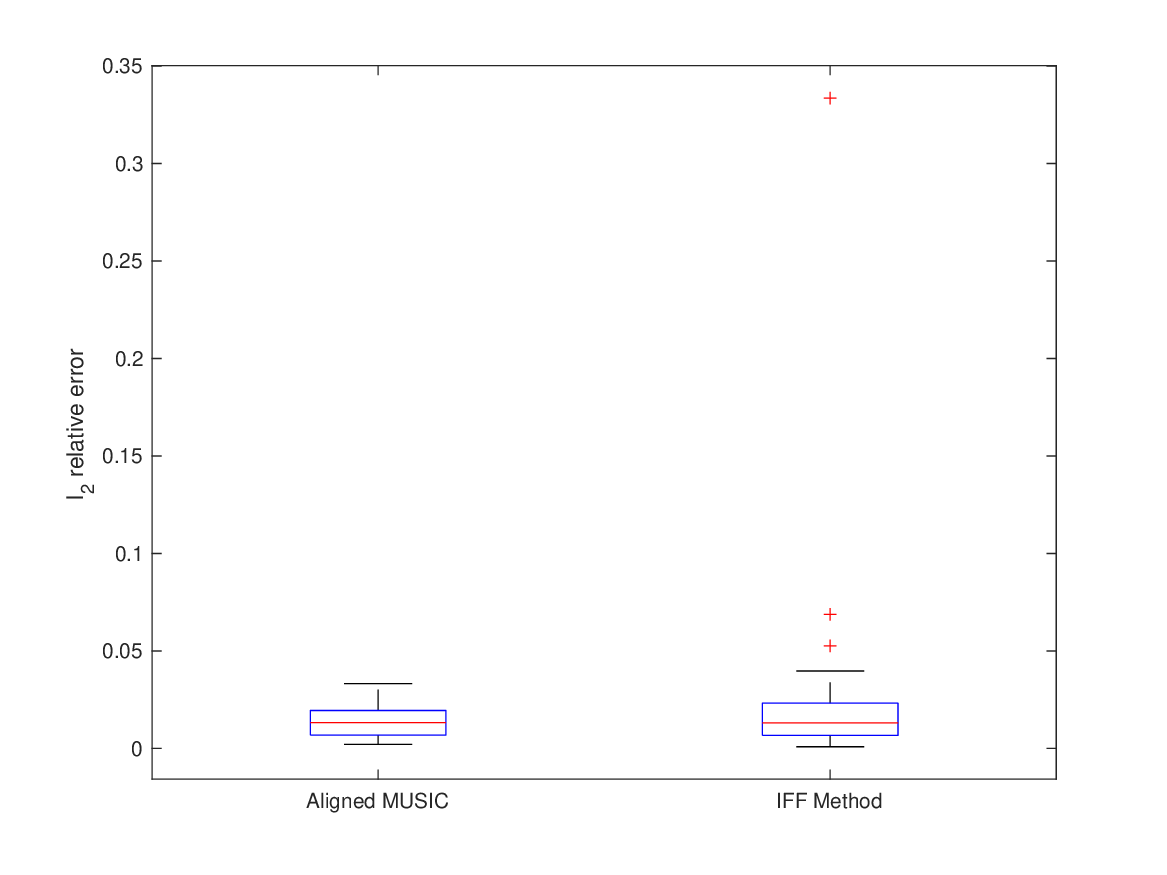}}
 \caption{Numerical result for reconstructing three point sources using the Aligned MUSIC Method and the IFF Method.}
\label{Fig:three_compare}
\end{figure}
We observe that the IFF Method achieves comparable accuracy and we notice that with the application of parallel computing, the IFF Method shows the advantage on the computational efficiency.\\
For the case where four point sources are closely positioned, we set $\Omega = 1$, $n=4$, $T=10$, $SNR\sim \mathcal{O}\left(10^{2}\right)$, and
\[
\mu = \delta_{-3}+\delta_{-1.5}+\delta_{0}+\delta_{1.5}.
\]
We fix the row number of the Hankel matrix to be 3 and apply parallel computing. We conduct 50 random experiments for the Aligned MUSIC Method and the IFF Method respectively. The comparison of averaged running time and reconstruction error is shown in Figure \ref{Fig:four_compare}.
\begin{figure}[ht]
	\centering
    \subfloat[Averaged running time comparison.]{
	\includegraphics[width=0.4\textwidth]{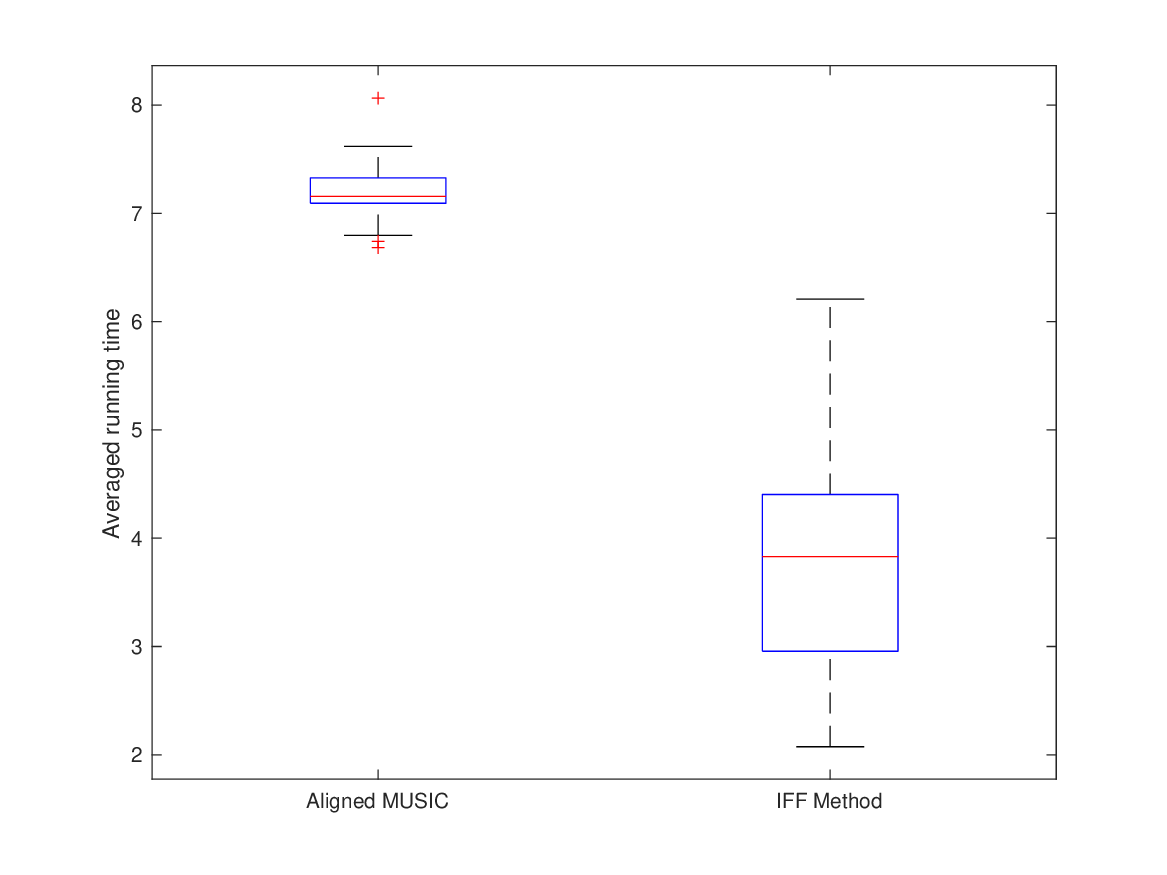}}
    \quad
    \subfloat[Reconstruction error comparison.]{
	\includegraphics[width=0.4\textwidth]{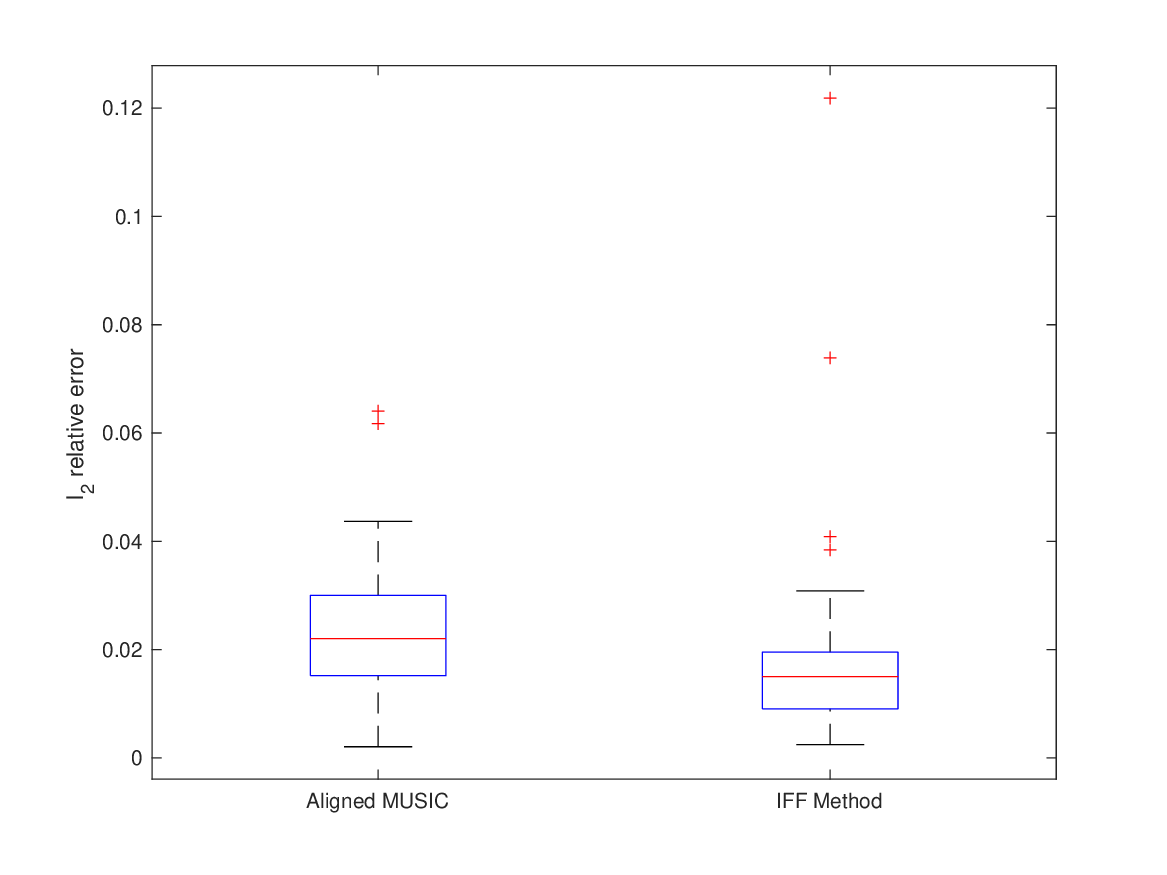}}
 \caption{Numerical result for reconstructing four point sources using the Aligned MUSIC Method and the IFF Method.}
\label{Fig:four_compare}
\end{figure}
We notice that for the setup above, the IFF Method achieves accuracy at least comparable to the Aligned MUSIC Method. The efficiency of the IFF Method can also be observed.\\
We also test the case where a larger number of sources are separated with distance $\pi$. We set $\Omega = 1$, $n=7$, $T=15$, $SNR\sim \mathcal{O}\left(10^{2}\right)$, and
\[
\mu = \sum_{j=-3}^3 \delta_{j\pi}.
\]
We fix the row number of the Hankel matrix to be 3 and apply parallel computing. We conduct 50 random experiments for the Aligned MUSIC Method and the IFF Method respectively. The comparison of averaged running time and reconstruction error is shown in Figure \ref{Fig:seven_compare}.
\begin{figure}[ht]
	\centering
    \subfloat[Averaged running time comparison.]{
	\includegraphics[width=0.4\textwidth]{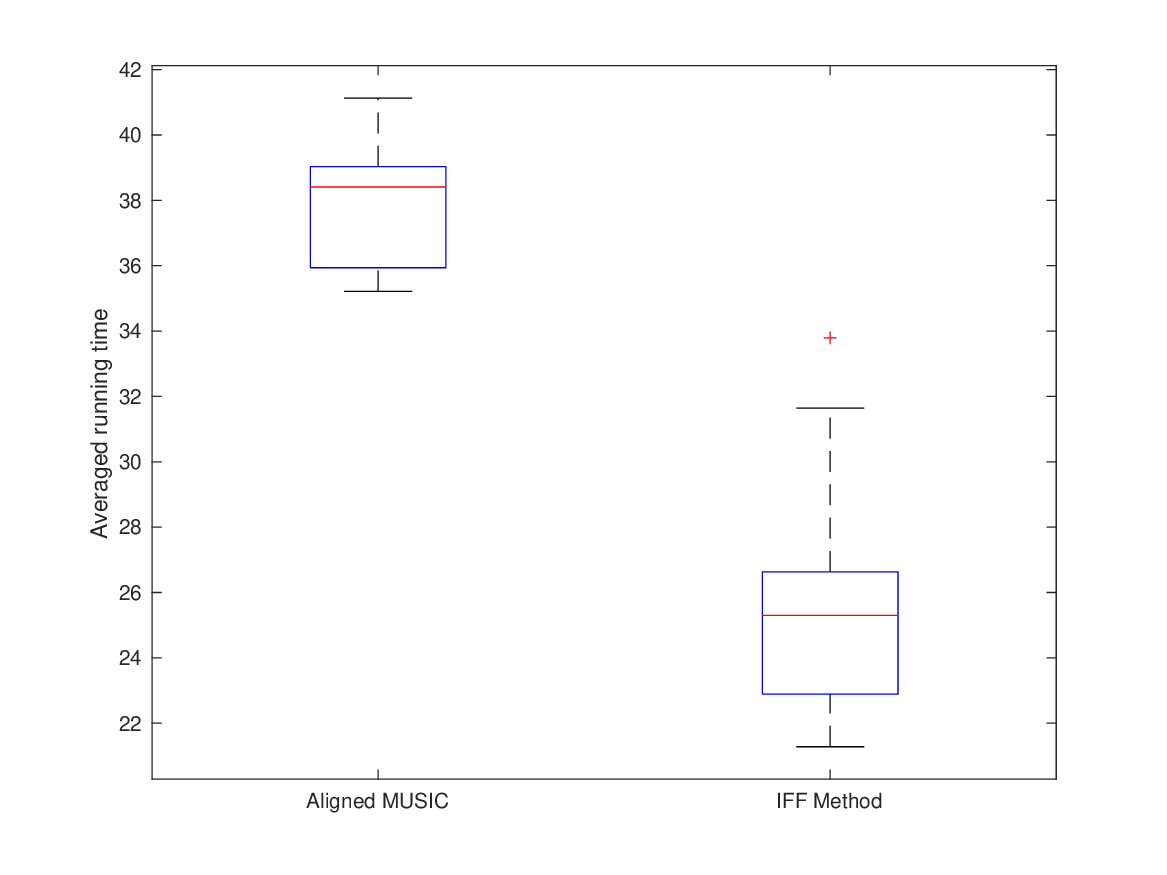}}
    \quad
    \subfloat[Reconstruction error comparison.]{
	\includegraphics[width=0.4\textwidth]{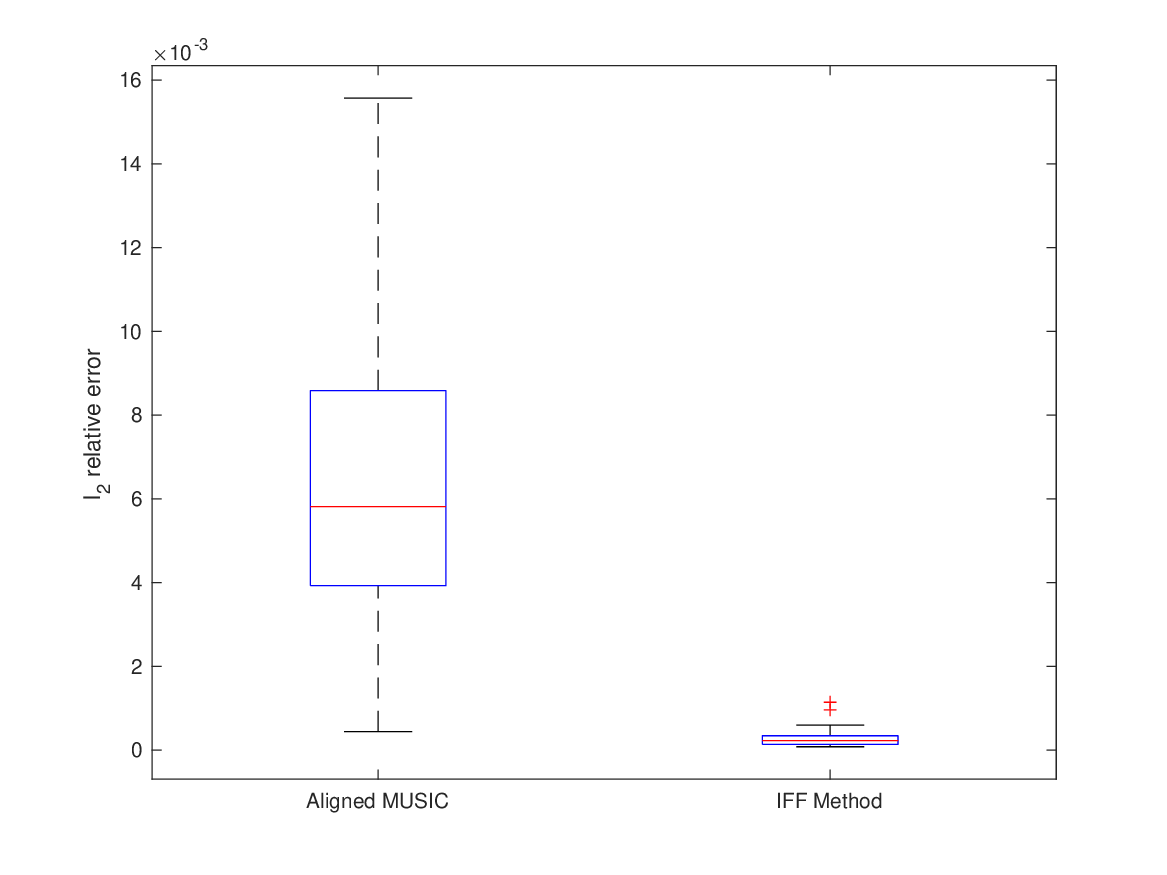}}
 \caption{Numerical result for reconstructing seven point sources using the Aligned MUSIC Method and the IFF Method.}
\label{Fig:seven_compare}
\end{figure}
We observe that the IFF Method achieves stable reconstruction in a shorter running time.

\subsection{Other Numerical Results}{\label{numerical results}}
In this subsection, we present three groups of experiments to test the robustness of the IFF algorithm to the image number $T$, the signal-to-noise ratio $SNR$, and the Hankel matrix size $m$ used in \textbf{Algorithm \ref{algo_2'}} respectively. All experiments in this subsection are conducted under $\Omega = 1$, $n=3$. In each group, we conduct 20 independent random experiments.

First, we set $SNR\sim \mathcal{O}\left(10^{2}\right)$, $m=2$, $T=6$ and
\[
\mu = \delta_{-d}+\delta_{0}+\delta_{d},
\]
where the separation distance, $d$, is chosen to be near the theoretical limit as in (\ref{separation_dist_upper}). We say the IFF Method succeeds if it finds three different sources from the multiple measurements. Figure \ref{Fig:number-successful_rate} shows the success rate under two different choices, $d=0.8$ and $d=0.9$. The results demonstrate that IFF can successfully reconstruct all the point sources provided that there are enough linearly independent measurements. Moreover, the reconstruction is easier when the point sources are more separated. 

\begin{figure}[H]
\centering
\includegraphics[width=0.5\textwidth]{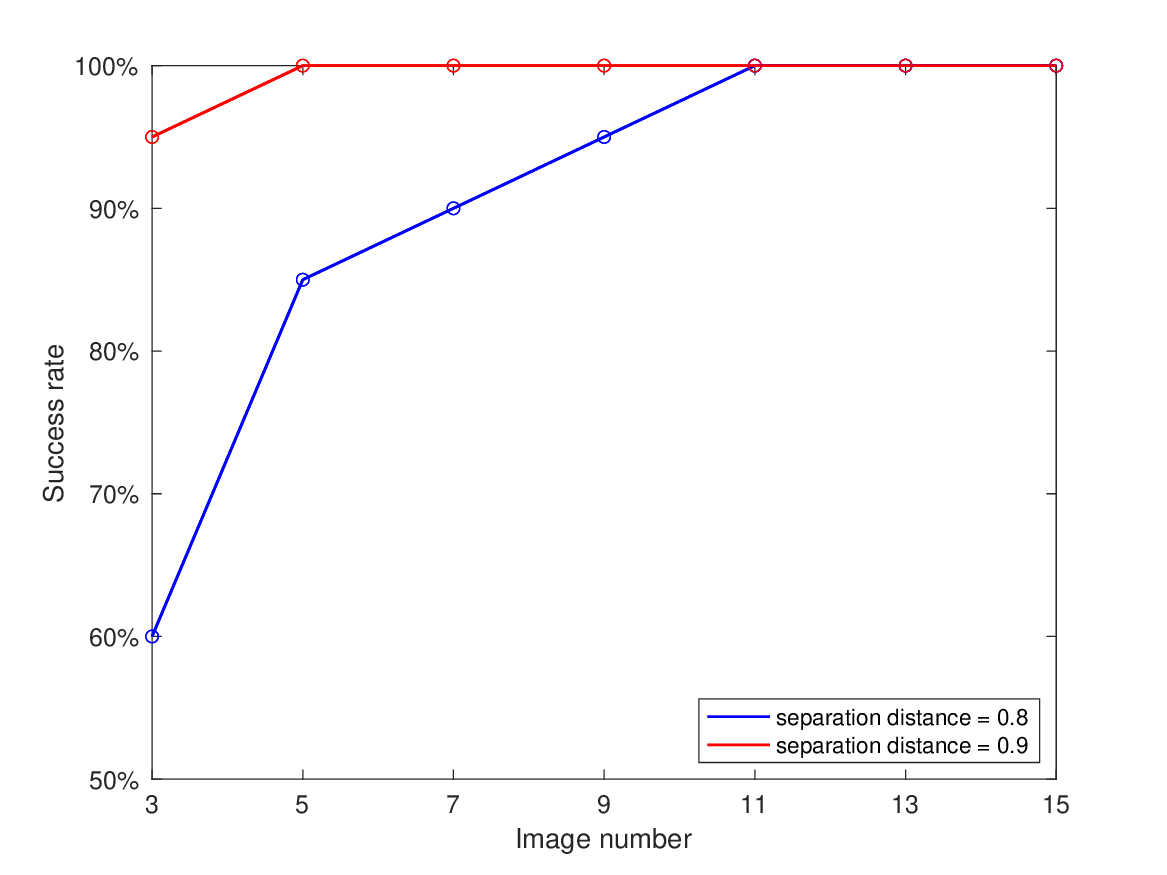}
\caption{Plot of success rate with increasing image numbers under two different separation conditions.}
\label{Fig:number-successful_rate}
\end{figure}

Second, we set $m=2$, $T=6$ and 
\[
\mu = \delta_{-0.9}+\delta_{0}+\delta_{0.9}.
\]
Figure \ref{Fig:noise-error} shows the $\ell_2$ relative error under different noise levels. The results clearly show the robustness of the IFF algorithm to the noises.

\begin{figure}[H]
\centering
\includegraphics[width=0.5\textwidth]{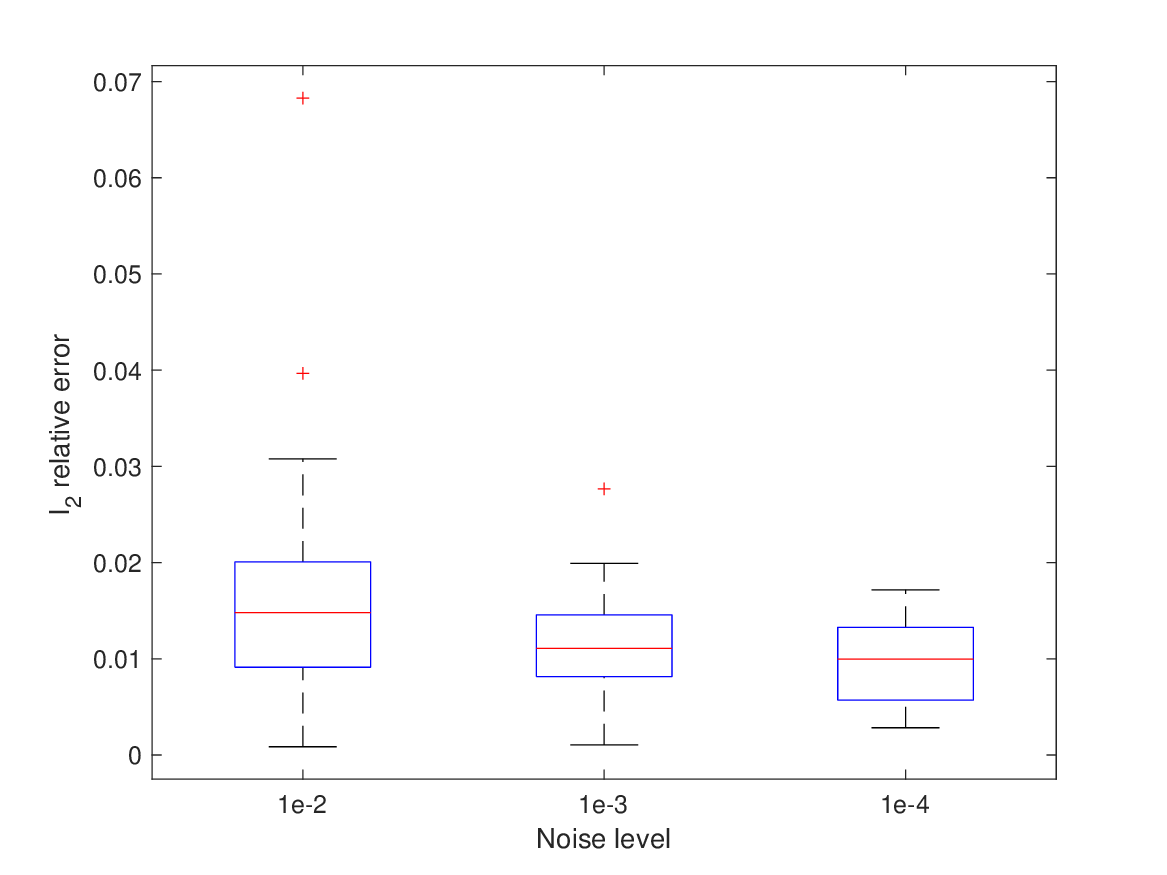}
\caption{Boxplot of $\ell_2$ relative error with different noise level.}
\label{Fig:noise-error}
\end{figure}
Finally, we study how the matrix size $m$ affects the reconstruction error through numerical experiments. We set $T=6$, $SNR\sim \mathcal{O}\left(10^{2}\right)$ and 
\[
\mu = \delta_{-0.9}+\delta_{0}+\delta_{0.9}.
\]
We observe that provided the noise level is $\mathcal{O}\left(10^{-2}\right)$, the result given by $2$-by-$2$ Hankel matrices is already satisfactory. We note that a larger choice of matrix size may bring better stability in the localization, but it also introduces higher computational costs.
\begin{figure}[H]
\centering
\includegraphics[width=0.5\textwidth]{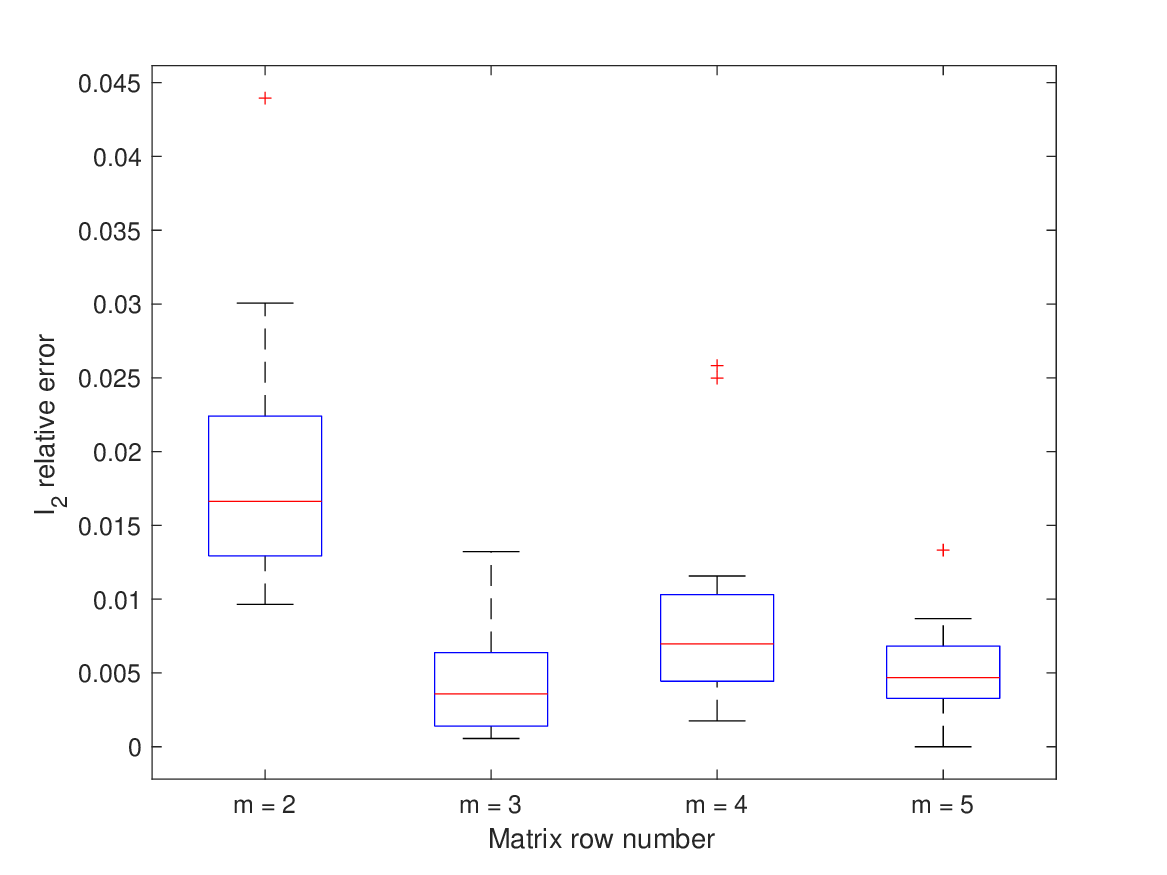}
\caption{Boxplot of $\ell_2$ relative error with different matrix size}
\label{Fig:size-error}
\end{figure}

\section{Conclusion}{\label{sec:conclusion}}
In this paper, we propose the IFF algorithm for the multiple measurements super-resolution problem in 1D and provided theoretical analysis for the method behind it. We use several groups of numerical experiments to illustrate the phase transition phenomenon and its efficiency and robustness. With comparable numerical results to standard methods, the advantages of the IFF algorithm can be summarized in the following three aspects. First, unlike standard subspace-method-based algorithms, the proposed one does not need prior information on the source number. Second, the subsampling strategy allows the reconstruction based on small-size Hankel matrices. Third, the algorithm can be easily paralleled. The proposed algorithm may offer a new perspective in designing algorithms for super-resolution and other related problems. We particularly expect this algorithm/method to play a greater role in problems where the singular value decomposition of large matrices by traditional subspace methods requires high computational power.

\section{Proofs of Results in Section \ref{sec:source focusing}}{\label{sec:proofs}}
\subsection{Proof of Proposition \ref{contrast}}
First, we denote
\begin{align}{\label{phi_omega}}
    \phi_s\left(\omega\right)=\left(1,\omega,\omega^2,\cdots,\omega^s\right).
\end{align}
We define $\mathcal{H}_j=a_je^{-iy_j\Omega}\phi_K(e^{iy_j\frac{\Omega}{K}})\phi^{\top}_K(e^{iy_j\frac{\Omega}{K}})$ and $\hat{\mathcal{H}}_j=a_j\phi_K(e^{iy_j\frac{\Omega}{K}})\phi^*_K(e^{iy_j\frac{\Omega}{K}})$. It is easy to check that $\|\mathcal{H}_j\|_2=\|\mathcal{H}_j\|_F=\|\hat{\mathcal{H}}_j\|_F$.\\
Let $H$ be the Hankel matrix generated by $\mu$ as in (\ref{hankel}). Clearly, $H$ has the following decomposition:
\begin{align}
    H = \sum_{j=1}^n \mathcal{H}_j + \left(\begin{array}{ccc}
W_{-K} & \cdots & W_{0} \\
\vdots & & \vdots \\
W_{0} & \cdots & W_{K}
\end{array}\right) \triangleq \sum_{j=1}^n \mathcal{H}_j+\Delta.
\end{align}
A straight-forward calculation gives
\begin{align}
\|\hat{\mathcal{H}}_j\|_F^2=|a_j|^2Tr\left(\hat{\mathcal{H}^*_j}\hat{\mathcal{H}_j}\right)=|a_j|^2|\phi_K(e^{iy_j\frac{\Omega}{K}})|^4=\left(K+1\right)^2|a_j|^2.
\end{align}
Notice that $H = \mathcal{H}_s+\sum_{t\ne s}\mathcal{H}_t+\Delta$. By Weyl's theorem, we have
\begin{align}
    |\|H\|_2-\|\mathcal{H}_s\|_2|
    &\le \|\sum_{t\ne s}\mathcal{H}_t+\Delta\|_2
    \le \|\sum_{t\ne s}\mathcal{H}_t+\Delta\|_F\notag\\
    &\le \sum_{t\ne s}\|\mathcal{H}_t\|_F+\|\Delta\|_F
    \le \left(K+1\right)\left(\sum_{t\ne s} |a_t|+\sigma\right).
\end{align}
Therefore
\begin{align}
    \|H\|_2\ge \|\mathcal{H}_s\|_2-\left(K+1\right)\left(\sum_{t\ne s} |a_t|+\sigma\right)
    = \left(K+1\right)\left(|a_s|-\sum_{t\ne s} |a_t|-\sigma\right).
\end{align}
On the other hand,
\begin{align}
    \|H\|_F\le\sum_{j=1}^n\|\mathcal{H}_j\|_F+\|\Delta\|_F
    = \left(K+1\right)\left(|a_s|+\sum_{t\ne s} |a_t|+\sigma\right).
\end{align}
Combining the above two inequalities, we have
\begin{align}
    \frac{Tr\left(N\right)^2}{Tr\left(N^*N\right)}
    \le \left(\frac{\|H\|_F}{\|H\|_2}\right)^4
    \le \left(\frac{|a_s|+\sum_{t\ne s} |a_t|+\sigma}{|a_s|-\sum_{t\ne s} |a_t|-\sigma}\right)^4.
\end{align}

\subsection{Proof of Proposition \ref{threshold}}
Similar to the argument in the proof of Proposition \ref{contrast}, we write 
\begin{align}
H & = M\left(\begin{array}{ccc}
e^{iy\omega_{-K}} & \cdots & e^{iy\omega_{0}} \\
\vdots & & \\
e^{iy\omega_0} & \cdots & e^{iy\omega_{K}}
\end{array}\right)+\left(\begin{array}{ccc}
W_{-K} & \cdots & W_{0} \\
\vdots & & \vdots \\
W_{0} & \cdots & W_{K}
\end{array}\right) \\
  & \triangleq X+\Delta.
\end{align}

\textbf{Step 1.}
We list the singular values of $H$ in a descending order as:
\[
\hat{\sigma}_1\ge\hat{\sigma}_2\ge \cdots\ge\hat{\sigma}_{K+1}.
\]
Similarly, we list the singular values of $X$:
\[
\sigma_1\ge\sigma_2\ge \cdots\ge\sigma_{K+1}.
\]
A straight-forward calculation shows that
\begin{align}{\label{sigma_cal}}
    \sigma_1 = M\left(K+1\right),\ \ \sigma_j = 0,\ j\ge 2.
\end{align}
By Weyl's theorem, for $1\le j \le K+1$, we have 
\begin{align}{\label{Wely's thm}}
    |\hat{\sigma}_j-\sigma_j|\le \|\Delta\|_2 \le\|\Delta\|_F\le \left(K+1\right)\sigma. 
\end{align}

\textbf{Step 2.}
Define $r_j = \frac{\hat{\sigma}_j}{\hat{\sigma}_1}$ for $1\le j\le K+1$.
Combining (\ref{sigma_cal}) and (\ref{Wely's thm}), we have 
\begin{align}
    r_j \le \frac{\sigma}{M-\sigma}\le \frac{2\sigma}{M}, \ 2\le j\le K+1.
\end{align}
By the fact that the function $g\left(t\right):=\frac{\left(1+t\right)^2}{1+t^2}$ is increasing on interval $\left(0,1\right)$, we have
\begin{align}
\frac{Tr\left(N\right)^2}{Tr\left(N^*N\right)} 
=\frac{\left(\sum_{j=1}^{K+1} \hat{\sigma}^2_j\right)^2}{\sum_{j=1}^{K+1} \hat{\sigma}^4_j}
 =\frac{\left(\sum_{j=1}^{K+1} r^2_j\right)^2}{\sum_{j=1}^{K+1} r^4_j}
 \le& \left(1+4K\frac{\sigma^2}{M^2}\right)^2.
\end{align}

\subsection{Proof of Proposition \ref{L_eff u not 0}}
Denote $1_T=\left(1,1,\cdots,1\right)^{\top}\in \mathbb{C}^T$ and 
let $Q_{j} = I_{T\times T}-L_{j}\left(L_{j}^* L_{j}\right)^{-1}L_{j}^*$. \\
It is easy to see that for each $1\le j\le T$, $Q_{j}$ is a positive semi-definite projection matrix with $Tr\left(Q_{j}\right) = T-n+1$, $\|Q_{j}\|_2 = 1$ and $\|Q_{j}\|_F^2 = Tr\left(Q_{j}^*Q_{j}\right)= Tr\left(Q_{j}\right)=T-n+1$.\\
Therefore for each $j = 1,2,\cdots,T$, we have
\begin{align}{\label{L eff estimation}}
	\mathbb{E} \|\left(I-\mathcal{P}_{j}\right)\alpha_j\|_2^2 
	&= \mathbb{E}\left(\alpha_j^*{Q_{j}^*}{Q_{j}}\alpha_j\right) \notag
	= \mathbb{E}\left(\mathbb{E}\left(\alpha_j^*{Q_{j}^*}Q_{j}\alpha_j|Q_{j}\right)\right) \notag\\
	&= \mathbb{E}\left(\mathbb{E}\left(Tr\left(\alpha_j^*{Q_{j}^*}Q_{j}\alpha_j\right)|Q_{j}\right)\right) \notag\\
	&= v^2 \mathbb{E}\left(Tr\left(Q_{j}\right)u^2+ 1_T^*Q_{j}1_T\right)\notag\\
	&\ge \left(T-n+1\right)v^2.
\end{align}

\subsection{Proof of Proposition \ref{L_eff}}
From the argument of the proof of Proposition \ref{L_eff u not 0}, when $u=0$, we have 
\begin{align}
\mathbb{E}\  R_{nn}^2 = \left(T-n+1\right)v^2.
\end{align}
Combining the equality above and the Hanson-Wright inequality \cite{rudelson2013hanson}, we obtain the desired result.

\subsection{Proof of Theorem \ref{separation_dist_upper}}
We only give the proof for $n$ is even, the case when $n$ is odd can be proved in a similar way.
\textbf{Step 1.} Let 
\begin{align}{\label{division with residue}}
    2K+1=\left(n+1\right)r+q,
\end{align}
where $r,\ q$ are nonnegative integers with $r\ge1$ and $0\le q<n+1$. We denote $\theta_j=y_j\frac{r\Omega}{K}$, $j=1,\cdots,n$. For $y_j\in[-\frac{\pi}{2\Omega},\frac{\pi}{2\Omega}]$, in view of (\ref{division with residue}), it is clear that
\begin{align}
    \theta_j=y_j\frac{r\Omega}{K}\in[-\frac{\pi}{2},\frac{\pi}{2}], \ \ \ j=1,\cdots,n.
\end{align}
We write $\tau' = \tau\frac{r\Omega}{K}$ to be the minimum separation distance between $\{\theta_j\}$, define $\mathcal{S} = \{y\in[-\frac{\pi}{2\Omega},\frac{\pi}{2\Omega}], \min_{j}|y-y_j|\ge \frac{\tau}{2}\}$, $\mathcal{S'} = \{\theta\in[-\frac{\pi}{2},\frac{\pi}{2}], \min_{j}|\theta-\theta_j|\ge \frac{\tau'}{2}\}$.\\
\textbf{Step 2.}
For $\hat{\mu} = \sum_{j=1}^n \hat{a}_j\delta_{y_j}$, we notice that
\begin{align}
    [\hat{\mu}]-[\mu]=\left(\mathcal{F}\hat{\mu}\left(\omega_{-K}\right),\mathcal{F}\hat{\mu}\left(\omega_{-K+1}\right),\cdots,\mathcal{F}\hat{\mu}\left(\omega_K\right)\right)^{\top}-\left(\mathcal{F}\mu\left(\omega_{-K}\right),\mathcal{F}\mu\left(\omega_{-K+1}\right),\cdots,\mathcal{F}\mu\left(\omega_K\right)\right)^{\top},
\end{align}
where $\omega_{-K}=-\Omega$, $\omega_{-K+1}=-\Omega+h$, $\cdots$, $\omega_K=\Omega$ and $h=\frac{\Omega}{K}$. In what follows, We denote the partial measurement as $\omega_{r,t}=-\Omega+trh$, $0\le t\le n$. We then have
\begin{align}
    \left(\mathcal{F}\hat{\mu}\left(\omega_{r,0}\right),\mathcal{F}\hat{\mu}\left(\omega_{r,1}\right),\cdots,\mathcal{F}\hat{\mu}\left(\omega_{r,n}\right)\right)^{\top}-\left(\mathcal{F}\mu\left(\omega_{r,0}\right),\mathcal{F}\mu\left(\omega_{r,1}\right),\cdots,\mathcal{F}\mu\left(\omega_{r,n}\right)\right)^{\top} = B_1\hat{a}-M\tilde{B}_1,
\end{align}
where $\hat{a}=\left(\hat{a_1},\hat{a_2},\cdots,\hat{a_n}\right)^{\top}$, $\tilde{B}_1=\left(e^{iy\omega_{r,0}},e^{iy\omega_{r,1}},\cdots,e^{iy\omega_{r,n}}\right)$ and
\begin{align}
    B_1=\left(\begin{array}{ccc}
e^{iy_1\omega_{r,0}} & \cdots & e^{iy_n\omega_{r,0}} \\
e^{iy_1\omega_{r,1}} & \cdots & e^{iy_n\omega_{r,1}} \\
\vdots & \vdots &\vdots \\
e^{iy_1\omega_{r,n}} & \cdots & e^{iy_n\omega_{r,n}}
\end{array}\right)
\end{align}
It is clear that 
\begin{align}{\label{infty norm to 2 norm}}
    \min_{\hat{a}\in\mathbb{C}^n,y\in\mathcal{S}}\|[\hat{\mu}]-[\mu]\|_{\infty}
    \ge\min_{\hat{a}\in\mathbb{C}^n,y\in\mathcal{S}}\|B_1\hat{a}-M\tilde{B}_1\|_{\infty}
    \ge\frac{1}{\sqrt{n+1}}\min_{\hat{a}\in\mathbb{C}^n,y\in\mathcal{S}}\|B_1\hat{a}-M\tilde{B}_1\|_{2}.
\end{align}
\textbf{Step 3.}
Let $\theta=y\frac{r\Omega}{K}\in[-\frac{\pi}{2},\frac{\pi}{2}]$ and $\phi_s\left(\omega\right)=\left(1,\omega,\omega^2,\cdots,\omega^s\right)$ defined as in (\ref{phi_omega}).\\
Notice that 
\begin{align}
    B_1 = \left(\phi_n(e^{i\theta_1}),\phi_n(e^{i\theta_2}),\cdots,\phi_n(e^{i\theta_n})\right)\text{diag}\left(e^{-iy_1\Omega},e^{-iy_2\Omega},\cdots,e^{-iy_n\Omega}\right).
\end{align}
We have 
\begin{align}\label{2norm rewrite}
    \min_{\hat{a}\in\mathbb{C}^n,y\in\mathcal{S}}\|B_1\hat{a}-M\tilde{B}_1\|_{2}
    =
    \min_{\hat{a}\in\mathbb{C}^n,y\in\mathcal{S}}\|D_1\hat{a}-\tilde{M}\tilde{D}_1\|_{2},
\end{align}
where $D_1=\left(\phi_n\left(e^{i\theta_1}\right),\phi_n\left(e^{i\theta_2}\right),\cdots,\phi_n\left(e^{i\theta_n}\right)\right)$, $\tilde{D}_1=\phi_n\left(e^{i\theta}\right)$, $\tilde{M}=e^{-iy\Omega}M$.\\
\textbf{Step 4.} By Theorem 3.1 in \cite{9410626}, we have
\begin{align}\label{estimation on 2norm}
    &\min_{\hat{a}\in\mathbb{C}^n,y\in\mathcal{S}}\|D_1\hat{a}-\tilde{M}\tilde{D}_1\|_{2}
     \ge \min_{\theta\in\mathcal{S'}}\ \frac{M}{2^n}|\prod_{j=1}^n \left(e^{i\theta}-e^{i\theta_j}\right)| \notag \\
    & \ge \min_{\theta\in\mathcal{S'}}\ \frac{M}{\pi^n}\prod_{j=1}^n |\theta-\theta_j|
    \quad\left(\text{by $|e^{i\theta}-e^{i\theta_j}|\ge\frac{2}{\pi}|\theta-\theta_j|$, for all $\theta,\theta_j\in[-\frac{\pi}{2},\frac{\pi}{2}]$}\right)\notag\\
    & =\frac{M}{\pi^n} \left(\frac{\tau'}{2}\right)^n\left(n-1\right)!
    \quad\left(\text{by Lemma \ref{lem_2}}\right)\notag\\
    & \ge M\left(\frac{\Omega}{\pi}\right)^n \frac{\left(n-1\right)!}{\left(n+1\right)^n}\frac{\tau^n}{2^n}
    \quad\left(\text{by the fact that $\frac{r\Omega}{K}\ge\frac{\Omega}{n+1}$}\right)\notag\\
    & \ge M\left(\frac{\Omega}{\pi}\right)^n \frac{\tau^n}{2^n} \frac{1}{e^n}\left(\frac{n-1}{n+1}\right)^n\sqrt{\frac{2\pi e^2}{n-1}}.
    \quad\left(\text{by Stirling's formula}\right)
\end{align}
\textbf{Step 5.} Combining (\ref{infty norm to 2 norm}), (\ref{2norm rewrite}) and (\ref{estimation on 2norm}), we have
\begin{align}
\min_{\hat{a}\in\mathbb{C}^n,y\in\mathcal{S}}\|[\hat{\mu}-\mu]\|_{\infty}
    \ge M\left(\frac{\Omega}{\pi}\right)^n \frac{\tau^n}{2^n} \frac{1}{e^n}\left(\frac{n-1}{n+1}\right)^n\sqrt{\frac{2\pi e^2}{n^2-1}}.
\end{align}
By the separation distance condition in (\ref{separation distance}) and the inequality that $\frac{2\left(n+1\right)}{n-1}\left(\frac{n^2-1}{2\pi e}\right)^{\frac{1}{2n}}<3.03$, we have 
\begin{align}
   \min_{\hat{a}\in\mathbb{C}^n,y\in\mathcal{S}}\|[\hat{\mu}]-[\mu]\|_{\infty}
    \ge \sigma'. 
\end{align}
It then follows that if $\{\delta_{y_j}\}_{j=1}^n$ is $\sigma'$-admissible to $\mu$ then $\mu$ need to be within the $\frac{\tau}{2}$-neighborhood of $\{\delta_{y_j}\}_{j=1}^n$.

\subsection{Proof of Proposition \ref{separation_dist_lower}}
We only give the proof for $n$ is even, the case when $n$ is odd can be proved in a similar way. We write $n=2p$.

\textbf{Step 1.} Let 
\begin{align}
	\tau = \dfrac{0.96e^{-\frac{3}{2}}}{\Omega}\left(\frac{\sigma'}{M}\right)^{\frac{1}{n}}
\end{align}
and $y_1=-p\tau$, $y_2=-\left(p-1\right)\tau$, $\cdots$, $y_p=-\tau$, $y_{p+1}=0$, $\cdots$, $y_{2p+1}=p\tau$. Consider the following linear system
\begin{align}{\label{lin_sys}}
	Aa=0,
\end{align}
where $A=\left(\phi_{2p-1}\left(t_1\right),\cdots,\phi_{2p-1}\left(t_{2p+1}\right)\right)$. Since this linear system is underdetermined and $A$ has full column rank, there exists a solution whose elements are all nonzero real numbers. By a scaling of $a$, we can assume that $|a_k| = \min_{1\le j \le 2p+1}|a_j|=M$.\\
We define $\hat \mu=\sum_{j\ne k}a_j\delta_{y_j}$, $\mu=-a_{k}\delta_{y_{k}}$ and show $\|[\hat{\mu}]-[\mu]\|_{\infty}\le \sigma'$ next.

\textbf{Step 2.} Observe that
\begin{align}
\|[\mu] - [\mu]\|_{\infty} \le \max _{\omega \in[-\Omega,\Omega]}|\mathcal{F}\gamma\left(\omega\right)|,
\end{align}
where $\gamma = \sum_{j=1}^{2p+1}a_j\delta_{y_j}$ and 
\begin{equation}{\label{multi_exp}}
\mathcal{F}\gamma\left(\omega\right)=\sum_{j=1}^{2 p+1} a_{j} e^{i y_j \omega}=\sum_{j=1}^{2 p+1} a_{j} \sum_{k=0}^{\infty} \frac{\left(i y_j \omega\right)^{k}}{k !}=\sum_{k=0}^{\infty} Q_{k}\left(\gamma\right) \frac{\left(i \omega\right)^{k}}{k !},
\end{equation}
where $Q_{k}\left(\gamma\right)=\sum_{j=1}^{2 p+1}a_jy_j^k$. By (\ref{lin_sys}), we have $Q_{k}\left(\gamma\right)=0$ for $0\le k\le2p-1$. We then estimate $Q_{k}\left(\gamma\right)$ for $k\ge 2p$.

\textbf{Step 3.} We reorder $a_j$'s in the following way
\begin{align}
	M = |a_{j_1}|\le|a_{j_2}|\le\cdots\le|a_{j_{2p+1}}|.
\end{align}
From (\ref{lin_sys}), we derive that
\begin{align}
a_{j_{1}} \phi_{2p-1}\left(y_{j_{1}}\right)=\left(\phi_{2p-1}\left(y_{j_{2}}\right), \cdots, \phi_{2p-1}\left(y_{j_{2p+1}}\right)\right)\left(-a_{j_{2}}, \cdots,-a_{j_{2p+1}}\right)^{\top},
\end{align}
and hence
\begin{align}
a_{j_{1}}\left(\phi_{2p-1}\left(y_{j_{2}}\right), \cdots, \phi_{2p-1}\left(y_{j_{2p+1}}\right)\right)^{-1} \phi_{2p-1}\left(y_{j_{1}}\right)=\left(-a_{j_{2}}, \cdots,-a_{j_{2p+1}}\right)^{\top}.
\end{align}
By Lemma $3.5$ in \cite{9410626}, we have
\begin{align}
a_{j_{2p+1}}=- a_{j_{1}} \prod_{2 \leq q \leq 2p} \frac{y_{j_{1}}-y_{j_{q}}}{y_{j_{2p+1}}-y_{j_{q}}}.
\end{align}
Therefore
\begin{align}
|a_{j_{2p+1}}| \le \frac{\left(2p\right) !}{p!\cdot p!}|a_{j_{1}}| \le 2^{2p} M,
\end{align}
and consequently
\begin{align}
\sum_{j=1}^{2p+1}\left|a_{j}\right|=\sum_{q=1}^{2 p+1}\left|a_{j_{q}}\right| \le  \left(2p+1\right)2^{2p} M .
\end{align}
Then for $k \ge 2p$,
\begin{align}
|Q_{k}\left(\gamma\right)|=|\sum_{j=1}^{2 n-1} a_{j} y_j^{k}| \le \sum_{j=1}^{2p+1}|a_{j}|\left(p\tau\right)^{k} \le \left(2p+1\right)2^{2p} M\left(p\tau\right)^{k} .
\end{align}

\textbf{Step 4.} Using (\ref{multi_exp}), we have
\begin{align}
|\max _{\omega \in[-\Omega, \Omega]} \mathcal{F}\gamma\left(\omega\right)| 
& \le \sum_{k \geq 2p}\left(2p+1\right) 2^{2p} M\left(p\tau\right)^{k} \frac{\Omega^{k}}{k!} \notag \\
&<\frac{\left(2p+1\right)2^{2p}M\left(p\tau\Omega\right)^{2p}}{\left(2p\right)!} \sum_{k=0}^{+\infty} \frac{\left(p \tau \Omega\right)^{k}}{k !} \notag \\
&=\frac{\left(2p+1\right)2^{2p}M\left(p\tau\Omega\right)^{2p}}{\left(2p\right)!} e^{p\tau \Omega} \notag \\
& \leq \frac{\left(2p+1\right) M}{2 \sqrt{\pi p}}\left(e \tau \Omega\right)^{2 p} e^{p \tau \Omega} \quad\left(\text {by Lemma \ref{lem_3}}\right) \notag \\
& \leq \frac{\left(2p+1\right) M}{2 \sqrt{\pi p}}\left(e \tau \Omega\right)^{2 p} e^{p}. \quad\left(\text{noting that (\ref{tau}) implies} \,\, \tau \Omega \leq 1\right) .
\end{align}
Finally, using (\ref{tau}) and the inequality that $0.96^{2p}\frac{2p+1}{2\sqrt{\pi p}}\le 1$, we have
\[
	\max _{\omega \in[-\Omega,\Omega]}|\mathcal{F}\gamma\left(\omega\right)|\le \sigma',
\]
which completes the proof.

\subsection{Proof of Theorem \ref{error_upper}}
Let 
$
d = \frac{\pi}{\Omega}\left(\frac{\sigma'}{M}\right),
$
 and $\mathcal{S} = \{z\in[-\frac{\pi}{2\Omega},\frac{\pi}{2\Omega}], |z-y|\ge d\}$.\\
For $\hat{\mu} = a\delta_z$, we have
\begin{align}
    [\hat{\mu}]-[\mu]
    &=\left(\mathcal{F}\hat{\mu}\left(\omega_{-K}\right),\mathcal{F}\hat{\mu}\left(\omega_{-K+1}\right),\cdots,\mathcal{F}\hat{\mu}\left(\omega_K\right)\right)^{\top}-\left(\mathcal{F}\mu\left(\omega_{-K}\right),\mathcal{F}\mu\left(\omega_{-K+1}\right),\cdots,\mathcal{F}\mu\left(\omega_K\right)\right)^{\top}\notag \\
    &=a\left(e^{iz\omega_{-K}},\cdots,e^{iz\omega_{K}}\right)^{\top}-M\left(e^{iy\omega_{-K}},\cdots,e^{iy\omega_{K}}\right)^{\top}.
\end{align}
where $\omega_{-K}=-\Omega$, $\omega_{-K+1}=-\Omega+h$, $\cdots$, $\omega_K=\Omega$ and $h=\frac{\Omega}{K}$.
Then, we have
\begin{align}
    \min_{a\in\mathbb{C},z\in\mathcal{S}}\|[\hat{\mu}]-[\mu]\|_{\infty}
    &=\min_{a\in\mathbb{C},z\in\mathcal{S}}\|a\left(e^{iz\omega_{-K}},1\right)^{\top}-M\left(e^{iy\omega_{-K}},1\right)^{\top}\|_{\infty} \notag \\
    &\ge \frac{1}{\sqrt{2}}\min_{a\in\mathbb{C},z\in\mathcal{S}}\|a\left(e^{iz\omega_{-K}},1\right)^{\top}-M\left(e^{iy\omega_{-K}},1\right)^{\top}\|_{2} \notag\\
    &=\frac{M}{2} \min_{z\in\mathcal{S}}|e^{iy\Omega}-e^{iz\Omega}| \quad(\text{by Theorem 3.1 in \cite{9410626}})\notag\\
    &\ge Md\frac{\Omega}{\pi}\quad(\text{by the inequality that $|e^{i\theta}-e^{i\hat{\theta}}|\ge\frac{2}{\pi}|\theta-\hat{\theta}|$, for $\theta,\hat{\theta}\in[-\frac{\pi}{2},\frac{\pi}{2}]$})\notag\\
    &\ge \sigma'.
\end{align}
Therefore, $|z-y|<d$ is a necessary condition.

\section{Appendix}{\label{sec:appendix}}
\begin{lem}\label{lem_2}
Let $-\frac{\pi}{2}\le\theta_1<\theta_2<\cdots<\theta_n\le\frac{\pi}{2}$ with $\lambda=\min_{i\neq j}|\theta_s-\theta_t|$. Let $\mathcal{S}=\{\theta\in[-\frac{\pi}{2},\frac{\pi}{2}], \min_{j}|\theta-\theta_j|\ge d\}$, where $d\le \frac{\lambda}{2}$. Then, we have
\begin{align*}
    \min_{\theta\in\mathcal{S}} \prod_{j=1}^n |\theta-\theta_j|\ge \frac{\left(n-1\right)!}{2^{n-1}}\cdot d\cdot\lambda^{n-1}.
\end{align*}
\end{lem}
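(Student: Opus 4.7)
The plan is to argue that the minimum of $\prod_{j=1}^n|\theta-\theta_j|$ over $\mathcal{S}$ is attained at a boundary point of $\mathcal{S}$ of the form $\theta_k\pm d$, then evaluate the product explicitly at such a point and reduce the claim to a single combinatorial inequality.

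The reduction proceeds as follows. Since $d\le\lambda/2$, the open intervals $(\theta_j-d,\theta_j+d)$ are pairwise disjoint, so $\mathcal{S}$ decomposes into the closed interior intervals $[\theta_k+d,\theta_{k+1}-d]$ for $k=1,\ldots,n-1$ together with the outer pieces $[-\pi/2,\theta_1-d]$ and $[\theta_n+d,\pi/2]$. On each component $P(\theta):=\prod_j(\theta-\theta_j)$ has constant sign and does not vanish, so $\log|P(\theta)|$ is well defined with second derivative $-\sum_j(\theta-\theta_j)^{-2}<0$, hence strictly concave. A concave function on a closed interval attains its minimum at an endpoint, so any minimizer of $|P|$ on $\mathcal{S}$ may be taken to lie in $\{\theta_k\pm d:k=1,\ldots,n\}$. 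By the reflection $\theta\mapsto-\theta$, which preserves all hypotheses, it is enough to treat $\theta^\star=\theta_k+d$ for some $k$.

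For $\theta^\star=\theta_k+d$, I would bound each factor using the minimum-separation hypothesis: $|\theta^\star-\theta_k|=d$; for $j<k$, $|\theta^\star-\theta_j|=(\theta_k-\theta_j)+d\ge(k-j)\lambda$; and for $j>k$, $|\theta^\star-\theta_j|=(\theta_j-\theta_k)-d\ge(j-k)\lambda-d\ge(j-k-\tfrac{1}{2})\lambda$, using $d\le\lambda/2$. Multiplying these estimates and invoking $\prod_{m=1}^M(m-\tfrac{1}{2})=(2M)!/(4^M M!)$ yields
\begin{align*}
\prod_{j=1}^n|\theta^\star-\theta_j|\ \ge\ d\,\lambda^{n-1}\cdot\frac{(k-1)!\,(2(n-k))!}{4^{n-k}\,(n-k)!}\ =:\ C(n,k)\,d\,\lambda^{n-1}.
\end{align*}

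It then remains to show $C(n,k)\ge(n-1)!/2^{n-1}$ for every $k\in\{1,\ldots,n\}$. Setting $m:=n-k$, this is equivalent to $\binom{n-1}{m}/2^{n-1}\le\binom{2m}{m}/4^{m}$. For fixed $m$, the ratio of consecutive terms on the left, $a_{n+1}/a_n=n/(2(n-m))$, is $\ge1$ iff $n\le 2m$, so $a_n:=\binom{n-1}{m}/2^{n-1}$ is maximized at $n=2m$; and there $a_{2m}=\binom{2m-1}{m}/2^{2m-1}=\binom{2m}{m}/4^{m}$ by the identity $\binom{2m}{m}=2\binom{2m-1}{m}$. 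The hard part of the argument is precisely this last inequality, which is where the factor $2^{n-1}$ in the claimed bound gets pinned down; the concavity-plus-reflection reduction and the per-factor estimate are otherwise routine, and the bound is sharp at the interior choice $\theta^\star=\theta_k+d$ with $n=2k$.
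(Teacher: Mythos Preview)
Your proof is correct and follows the same overall strategy as the paper's: locate the minimizer at a boundary point of the form $\theta_k\pm d$, bound each factor $|\theta^\star-\theta_j|$ using the separation $\lambda$ (with $(j-k)\lambda-d\ge(j-k-\tfrac12)\lambda$ on the far side), and reduce to a combinatorial inequality equivalent to $(p-1)!\,(2(n-p)-1)!!/2^{\,n-p}\ge(n-1)!/2^{\,n-1}$. Your execution differs from the paper's in two respects worth noting. First, you obtain the reduction to boundary points via the strict concavity of $\log|P(\theta)|=\sum_j\log|\theta-\theta_j|$ on each component of $\mathcal S$; the paper instead passes to the equally spaced polynomial $P(x)=x(\lambda+x)\cdots((p-1)\lambda+x)\,(\lambda-x)\cdots((n-p)\lambda-x)$ and asserts without justification that its minimum over $x\in[d,\theta_{p+1}-\theta_p-d]$ is at an endpoint. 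Second, you supply a clean ratio argument for the inequality $\binom{n-1}{m}/2^{\,n-1}\le\binom{2m}{m}/4^{m}$ (with $m=n-k$), which is exactly the step the paper writes as a bare ``$\ge$'' when passing from $(p-1)!\,(2(n-p)-1)!!/2^{\,n-p}$ to $(n-1)!/2^{\,n-1}$. One small omission: on the two outer components $[-\pi/2,\theta_1-d]$ and $[\theta_n+d,\pi/2]$ the concavity argument only places the minimum at \emph{some} endpoint, so you should add that $|P|$ is monotone there (all factors move in the same direction), whence the minimum on those pieces is at $\theta_1-d$ or $\theta_n+d$ rather than at $\pm\pi/2$; this completes your reduction to $\{\theta_k\pm d\}$.
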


\begin{proof}
    We only prove the case n is even, the case $n$ is odd can be similarly proved.\\
Let $\mathcal{S'}=\{\theta\in[\theta_1,\theta_n], \min_{j}|\theta-\theta_j|\ge d\}$. Clearly, we have
\begin{align*}
    \min_{\theta\in\mathcal{S}} \prod_{j=1}^n |\theta-\theta_j|= \min_{\theta\in\mathcal{S'}} \prod_{j=1}^n |\theta-\theta_j|
\end{align*}
Assume the minimum is achieved at $\theta^*\in\left(\theta_p,\theta_{p+1}\right)$, for some $p$, where $1\le p\le n-1$. Denote $x = \theta^*-\theta_p\in[d,\theta_{p+1}-\theta_p-d]$. We then have
\begin{align*}
    \min_{\theta\in\mathcal{S'}} \prod_{j=1}^n |\theta-\theta_j|
    & =x\left(\lambda+x\right)\left(2\lambda+x\right)\cdots\left(\left(p-1\right)\lambda+x\right)\cdot\left(\lambda-x\right)\cdots\left(\left(n-p\right)\lambda-x\right)\triangleq P\left(x\right).
\end{align*}
It is easy to check that $P\left(x\right)\ge P\left(\theta_{p+1}-\theta_p-d\right)$ when $2p\le n$, and $P\left(x\right)\ge P\left(d\right)$ when $2p\ge n$.\\
For $2p\ge n$, which implies $p\ge \frac{n}{2}$:
\begin{align*}
    P\left(x\right)\ge P\left(d\right)
    &= d\left(\lambda+d\right)\left(2\lambda+d\right)\cdots\left(\left(p-1\right)\lambda+d\right)\cdot\left(\lambda-d\right)\cdots\left(\left(n-p\right)\lambda-d\right)\\
    &\ge d \cdot \left(p-1\right)!\lambda^{p-1} \cdot \frac{\left(2\left(n-p\right)-1\right)!!}{2^{n-p}}\lambda^{n-p}\\
    &\ge d \lambda^{n-1} \frac{\left(2T-2\right)!!\left(2T-1\right)!!}{2^{n-1}}\\
    &\ge d \lambda^{n-1} \frac{\left(n-1\right)!}{2^{n-1}}
\end{align*}
By symmetry, we can derive the same result for $p\le \frac{n}{2}$.
\end{proof}

\begin{lem}{\label{lem_3}}
	For $n\ge 1$, we have
	\begin{align}
		\frac{n^{2n}}{\left(2n\right)!}\le\frac{1}{2\sqrt{\pi}}\left(\frac{e}{2}\right)^{2n}
	\end{align}
\end{lem}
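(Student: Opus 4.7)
The plan is to derive the bound directly from a sharp lower form of Stirling's formula applied to $(2n)!$. Specifically, I will use the classical Stirling inequality
\[
m! \ge \sqrt{2\pi m}\,\Bigl(\tfrac{m}{e}\Bigr)^m \quad \text{for all } m\ge 1,
\]
which follows from the standard refinement $m! = \sqrt{2\pi m}(m/e)^m e^{r_m}$ with $r_m>0$.

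First, I would set $m=2n$ in this inequality, obtaining
\[
(2n)! \;\ge\; \sqrt{4\pi n}\,\Bigl(\tfrac{2n}{e}\Bigr)^{2n}.
\]
Substituting this into the left-hand side of the claim and simplifying, the factors of $n^{2n}$ and $(2n)^{2n}$ combine to give
\[
\frac{n^{2n}}{(2n)!} \;\le\; \frac{n^{2n}}{\sqrt{4\pi n}\,(2n/e)^{2n}}
\;=\; \frac{1}{2\sqrt{\pi n}}\,\Bigl(\frac{e}{2}\Bigr)^{2n}.
\]

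Finally, since $n\ge 1$ implies $\sqrt{n}\ge 1$, the factor $\tfrac{1}{2\sqrt{\pi n}}$ is bounded above by $\tfrac{1}{2\sqrt{\pi}}$, yielding the stated inequality. There is no genuine obstacle here: the entire argument is a one-line application of Stirling's lower bound followed by the trivial monotonicity step in $n$. The only point requiring a moment of care is citing the correct direction of Stirling's inequality (lower bound on $(2n)!$ to produce an upper bound on the ratio), but this is immediate from the positivity of the Stirling remainder term.
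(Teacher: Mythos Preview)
Your proof is correct and follows essentially the same approach as the paper: both apply the Stirling lower bound $(2n)! \ge \sqrt{2\pi}\,(2n)^{2n+1/2}e^{-2n}$ and simplify. You are in fact slightly more careful, making explicit the intermediate bound $\tfrac{1}{2\sqrt{\pi n}}(e/2)^{2n}$ and the use of $n\ge 1$ to drop the extra $\sqrt{n}$, which the paper's final equality glosses over.
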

\begin{proof}
	By the following Stirling approximation:
	\begin{align}{\label{stirling}}
	\sqrt{2\pi}n^{n+\frac{1}{2}}e^{-n}\le n! \le en^{n+\frac{1}{2}}e^{-n}.
\end{align}
	we have 
\[
	    	\frac{n^{2n}}{\left(2n\right)!}\le\frac{n^{2n}}{\sqrt{2\pi}\left(2n\right)^{2n+\frac{1}{2}}e^{-2n}}=\frac{1}{2\sqrt{\pi}}\left(\frac{e}{2}\right)^{2n}. 
\]
\end{proof}

\newpage

\bibliographystyle{ieeetr}
\bibliography{references} 
	
\end{document}